\newtheorem{fact}{Fact}[section]
\newtheorem{lemma}[fact]{Lemma}
\newtheorem{theorem}[fact]{Theorem}
\newtheorem{corollary}[fact]{Corollary}
\DeclareMathOperator*{\expv}{Exp}
\DeclareMathOperator*{\poisv}{Pois}
\DeclareMathOperator*{\poly}{poly}
\newcommand{\TV}{{\mathsf{TV}}}
\newcommand{\lscore}{\operatorname{score}}
\newcommand{\argmax}{\operatornamewithlimits{argmax}}
\newcommand{\R}{\mathbb{R}}
\newcommand{\kldiv}[2]{{D( #1 \, \| \, #2)}}
\newcommand{\edgl}{{y}}
\newcommand{\ninf}{{\rho}}
\newcommand{\sinf}[2]{{{#1} \langle {#2} \rangle}}
\newcommand{\vctr}[1]{{\mathbf{#1}}}
\newcommand{\tdel}{{\mathring{t}}}
\newcommand{\vtdel}{{\vctr{\tdel}}}
\newcommand{\expect}{{\mathbb{E}}}
\newcommand{\NetInf}{{\sc NetInf}\xspace}
\begin{document}

\title{Trace Complexity of Network Inference\footnote{A preliminary version of this paper appeared in \cite{ACKP13}.}}
\author{
Bruno Abrahao\thanks{Supported by AFOSR grant FA9550-09-1-0100, by a Cornell University Graduate School Travel Fellowship, and by a Google Award granted to Alessandro Panconesi.}\\
Department of Computer Science\\
Cornell University\\
Ithaca, NY, 14850, USA\\
{\normalsize \tt abrahao@cs.cornell.edu}
\and
Flavio Chierichetti\thanks{Supported by a Google Faculty Research Awards and by the MULTIPLEX project (EU-FET-317532).}\\
Dipartimento di Informatica\\
Sapienza University\\
Rome, Italy\\
{\normalsize \tt flavio@di.uniroma1.it}
\and
Robert Kleinberg\thanks{Supported by AFOSR grant FA9550-09-1-0100, a Microsoft Research New Faculty Fellowship, and a Google Research Grant.}\\
Department of Computer Science\\
Cornell University\\
Ithaca, NY, 14850, USA\\
{\normalsize \tt rdk@cs.cornell.edu}
\and
Alessandro Panconesi\thanks{Supported by a Google Faculty Research Awards and by the MULTIPLEX project (EU-FET-317532).}\\\
Dipartimento di Informatica\\
Sapienza University\\
Rome, Italy\\
{\normalsize \tt ale@di.uniroma1.it}
}

\maketitle




\begin{abstract}

The network inference problem consists of reconstructing the edge set of a network given traces representing the chronology of infection times as epidemics spread through the network. This problem is a paradigmatic representative of prediction tasks in machine learning that require deducing a latent structure from observed patterns of activity in a network, which often require an unrealistically large number of resources (e.g., amount of available data, or computational time). A fundamental question is to understand which properties we can predict with a reasonable degree of accuracy with the available resources, and which we cannot.
We define the \emph{trace complexity} as the number of distinct traces required to achieve high fidelity in reconstructing the topology of the unobserved network or, more generally, some of its properties. We give algorithms that are competitive with, while being simpler and more efficient than, existing network inference approaches.
Moreover, we prove that our algorithms are nearly optimal, by proving an information-theoretic lower bound on the
number of traces that an optimal inference algorithm requires for performing this task in the general case.
Given these strong lower bounds, we turn our attention to special cases, such as trees and bounded-degree graphs, and to property recovery tasks, such as reconstructing the degree distribution without inferring the network. We show that these problems require a much smaller (and more realistic) number of traces, making them potentially solvable in practice.



\end{abstract}

\newpage
\section{Introduction}

Many technological, social, and biological phenomena are naturally modeled as the propagation of a contagion through a network. For instance, in the blogosphere, ``memes'' spread through an underlying social network of bloggers~\cite{Adar05Tracking}, and, in biology, a virus spreads over a population through a network of contacts~\cite{Bailey75Mathematical}. In many such cases, an observer may not directly probe the underlying network structure, but may have access to the sequence of times at which the nodes are infected. Given one or more such records, or \emph{traces}, and a probabilistic model of the epidemic process, we can hope to deduce the underlying graph structure or at least estimate some of its properties. This is the \emph{network inference} problem, which researchers have studied extensively in recent years~\cite{Adar05Tracking, Gomez-Rodriguez2010Inferring, Gomez-Rodriguez2011Uncovering, Du2012Learning,Netrapalli12Learning}.

In this paper we focus on the number of traces that network inference tasks require, which we define as the \emph{trace complexity} of the problem. Our work provides inference algorithms with rigorous upper bounds on their trace complexity, along with information-theoretic lower bounds. We consider network inference tasks under a diffusion model presented in \cite{Gomez-Rodriguez2010Inferring}, whose suitability for representing real-world cascading phenomena in networks is supported by empirical evidence. In short, the model consists of a random cascade process that starts at a single node of a network, and each edge $\{u,v\}$ independently propagates the epidemic, once $u$ is infected, with probability $p$ after a random \emph{incubation time}.

{\bf Overview of results.} In the first part of this paper, we focus on determining the number of traces that are necessary and/or sufficient to perfectly recover the edge set of the whole graph with high probability. We present algorithms and (almost) matching lower bounds for exact inference by showing that in the worst case, $\Omega\left( \frac{n\Delta}{\log^2 \Delta}\right)$ traces are necessary and $O(n \Delta \log n)$ traces are sufficient, where $n$ is the number of nodes in the network and $\Delta$ is its maximum degree. In the second part, we consider a natural line of investigation, given the preceding strong lower bounds, where we ask whether exact inference is possible using a smaller number of traces for special classes of networks that frequently arise in the analysis of social and information networks. Accordingly, we present improved algorithms and trace complexity bounds for two such cases. We give a very simple and natural algorithm for exact inferences of trees that uses only $O(\log n)$ traces.\footnote{All inference results in this paper hold with high probability.}  To further pursue this point, we give an algorithm that exactly reconstructs graphs of degree bounded by $\Delta$ using only $O(\poly(\Delta) \log n)$ traces, under the assumption that epidemics always spread throughout the whole graph. Finally, given that recovering the topology of a hidden network in the worst case requires an impractical number of traces, a natural question is whether some non-trivial property of the network can be accurately determined using a moderate number of traces. Accordingly, we present a highly efficient algorithm that, using vastly fewer traces than are necessary for reconstructing the entire edge set, reconstructs the degree distribution of the network with high fidelity by using $O(n)$ traces.

{\bf The information contained in a trace.} Our asymptotic results also provide some insight into the usefulness of information contained in a trace. Notice that the first two nodes of a trace unambiguously reveal one edge --- the one that connects them. As we keep scanning a trace the signal becomes more and more blurred: the third node could be a neighbor of the first or of the second node, or both. The fourth node could be the neighbor of any nonempty subset of the first three nodes, and so on. The main technical challenge in our context is whether we can extract any useful information from the \emph{tail} of a trace, i.e., the suffix consisting of all nodes from the second to the last. As it turns out, our lower bounds show that, for perfect inference on general connected graphs, the answer is ``no'': we show that the {\em First-Edge  algorithm}, which just returns the edges corresponding to the first two nodes in each trace and ignores the rest, is essentially optimal. This limitation precludes optimal algorithms with practical trace complexity\footnote{On the other hand, the use of short traces may not be only a theoretical limitation, given the real world traces that we observe in modern social networks. For example, Bakshy et al.~\cite{Bakshy11Everyone} report that most cascades in Twitter (\url{twitter.com}) are short, involving one or two hops.}. This  result motivates further exploration of trace complexity for special-case graphs. 
Accordingly, for trees and bounded degree graphs, we illustrate how the tail of traces can be extremely useful for network inference tasks.

Our aforementioned algorithms for special-case graphs make use of maximum likelihood estimation (MLE) but in different ways. Previous approaches, with which we compare our results, have also employed MLE for network inference. For instance, \NetInf~\cite{Gomez-Rodriguez2010Inferring} is an algorithm that attempts to reconstruct the network from a set of independent traces by exploring a submodular property of its MLE formulation. Another example, and closest to ours, is the work by Netrapalli and Sangahvi~\cite{Netrapalli12Learning}, whose results include qualitatively similar bounds on trace complexity in a quite different epidemic model. 

Turning our attention back to our algorithms, our tree reconstruction algorithm performs global likelihood maximization over the entire graph, like the \NetInf algorithm~\cite{Gomez-Rodriguez2010Inferring}, whereas our bounded-degree reconstruction algorithm, like the algorithm in~\cite{Netrapalli12Learning}, performs MLE at each individual vertex. Our algorithms and analysis techniques, however, differ markedly from those of~\cite{Gomez-Rodriguez2010Inferring} and \cite{Netrapalli12Learning}, and may be of independent interest.

In the literature on this rapidly expanding topic, researchers have validated their findings using small or stylized graphs and a relatively large number of traces. In this work, we aim to provide, in the same spirit as \cite{Netrapalli12Learning}, a formal and rigorous understanding of the potentialities and limitations of algorithms that aim to solve the network inference problem.

This paper is organized as follows. Section~\ref{sec:relwork} presents an overview of previous approaches to network learning. Section~\ref{sec:model} presents the cascade model we consider throughout the paper. Section~\ref{sec:firstedge} deals with the \emph{head of the trace}: it presents the First-Edge algorithm for network inference, shows that it is essentially optimal in the worst case, and shows how the first edges' timestamps can be used to guess the degree distribution of the network. Section~\ref{sec:tail}, instead, deals with the \emph{tail of the trace}: it presents efficient algorithms for perfect reconstruction of the topology of trees and of bounded degree networks.  Section~\ref{sec:exp} presents an experimental analysis that compares ours and existing results through the lens of trace complexity. Section~\ref{sec:conc} offers our conclusions. The proofs missing from the main body of the paper can be found in  Appendix~\ref{app:proofs}.

\section{Related Work}
\label{sec:relwork}

Network inference has been a highly active area of investigation in data mining and machine learning~\cite{Adar05Tracking, Gomez-Rodriguez2010Inferring, Gomez-Rodriguez2011Uncovering, Du2012Learning, Netrapalli12Learning}. It is usually assumed that an event  initially activates one or more nodes in a network,  triggering a cascading process, e.g., bloggers acquire a piece of information that interests other bloggers~\cite{Gruhl04Information}, a group of people are the first infected by a contagious virus~\cite{Bailey75Mathematical}, or a small group of consumers are the early adopters of a new piece of technology that subsequently becomes popular~\cite{Rogers03Diffusion}. In general, the process spreads like an epidemic over a network (i.e., the network formed by blog readers, the friendship network, the coworkers network). Researchers derive observations from each cascade in the form of {\em traces} ---  the identities of the people that are activated in the process and the timestamps of their activation. However, while we do see traces, we do not directly observe the network over which the cascade spreads. The network inference problem consists of recovering the underlying network using the epidemic data.

In this paper we study the cascade model that Gomez-Rodrigues et al.~\cite{Gomez-Rodriguez2010Inferring} introduced, which consists of a variation of the independent cascade model~\cite{Kempe2003Maximizing}. Gomez-Rodrigues et al.  propose \NetInf, a maximum likelihood algorithm, for network reconstruction. 
Their method is evaluated under the exponential and power-law distributed incubation times. In our work, we restrict our analysis to the case where the incubation times are exponentially distributed as this makes for a rich arena of study. 

Gomez-Rodrigues et al. have further generalized the model to include different transmission rates for different edges and a broader collection of waiting times distributions~\cite{Gomez-Rodriguez2011Uncovering, myers10OntheConvexity}. Later on, Du et al.~\cite{Du2012Learning} proposed a kernel-based method that is able to recover the network without prior assumptions on the waiting time distributions. These methods have significantly higher computational costs than \NetInf, and, therefore, than ours. Nevertheless, experiments on real and synthetic data show a marked improvement in accuracy, in addition to gains in flexibility. Using a more combinatorial approach, Gripon and Rabbat~\cite{Gripon13Reconstructing} consider the problem of reconstructing a graph from traces defined as sets of unordered nodes, in which the nodes that appear in the same trace are connected by a path containing exactly the nodes in the trace. In this work, traces of size three are considered, and the authors identify necessary and sufficient conditions to reconstruct graphs in this setting.


The performance of network inference algorithms is dependent on the amount of information available for the reconstruction, i.e., the number and length of traces. The dependency on the number of traces have been illustrated in \cite{Du2012Learning}, \cite{Gomez-Rodriguez2011Uncovering}, and \cite{Gomez-Rodriguez2010Inferring} by plotting the performance of the algorithms against the number of available traces. Nevertheless, we find little research on a rigorous analysis of this dependency, with the exception of one paper~\cite{Netrapalli12Learning} that we now discuss. 


Similarly to our work, Netrapalli and Sangahvi~\cite{Netrapalli12Learning} present quantitative bounds on trace complexity in a quite different epidemic model. The model studied in \cite{Netrapalli12Learning} is another variation of the independent cascade model. It differs from the model we study in a number of key aspects, which make that model a simplification of the model we consider here. For instance, (i) \cite{Netrapalli12Learning} assumes a cascading process over discrete time steps, while we assume continuous time (which has been shown to be a realistic model of several real-world processes~\cite{Gomez-Rodriguez2010Inferring}),  (ii) the complexity analyzed in \cite{Netrapalli12Learning} applies to a model where nodes are active for a single time step --- once a node is infected, it has a single time step to infect its neighbors, after which it becomes permanently inactive.
The model we consider does not bound the time that a node can wait before infecting a neighbor. Finally, (iii) 
\cite{Netrapalli12Learning} rely crucially on the ``correlation decay'' assumption, which implies -- for instance --- that each node
can be infected during the course of the epidemics by {\em less} than 1 neighbor in expectation.
The simplifications in the model presented by \cite{Netrapalli12Learning}  make it less realistic --- and,  also, make the inference task significantly easier than the one we consider here.


We believe that our analysis introduces a rigorous foundation to assess the performance of existing and new algorithms for network inference. In addition, to the best of our knowledge, our paper is the first to study how different parts of the trace can be useful for different network inference tasks. Also, it is the first to study the trace complexity of special case graphs, such as bounded degree graphs, and for reconstructing non-trivial properties of the network (without reconstructing the network itself), such as the node degree distribution.

\section{Cascade Model}
\label{sec:model}

The cascade model we consider
is defined as follows. 
It starts with one activated node, henceforth called the {\em source} of the epidemic, which is considered to be activated, without loss of generality, at time $t=0$. 

As soon as a node $u$ gets activated, for each neighbor $v_i$, 
 $u$ flips an independent coin: with probability $p$ it will start a countdown on the edge $\{u,v_i\}$. The length of the countdown will be a random variable distributed according to $\expv(\lambda)$ (exponential\footnote{\cite{Gomez-Rodriguez2010Inferring, Gomez-Rodriguez2011Uncovering, Du2012Learning} consider other random timer distributions; we will mainly be interested in exponential variables as this setting is already rich enough to make for an interesting and extensive analysis.} with parameter $\lambda$). When the countdown reaches $0$, that edge is {\em traversed} --- that is, that epidemic reaches $v_i$ via $u$.

The ``trace'' produced by the model will be a sequence of tuples (node $v$, $t(v)$) where $t(v)$ is the first time at which the epidemics reaches $v$.

In \cite{Gomez-Rodriguez2010Inferring}, the source of the epidemics is chosen uniformly at random from the nodes of the network.
In general, though, the source can be chosen arbitrarily\footnote{Choosing sources in a realistic way is an open problem --- the data that could offer a solution to this problem seems to be extremely scarce at this time.}.

The cascade process considered here admits a number of equivalent  descriptions. The following happens to be quite handy: independently for each edge of $G$, remove the edge with probability $1-p$ and otherwise assign a random edge length sampled from $\expv(\lambda)$.  Run Dijkstra's single-source shortest path algorithm on the subgraph formed by the edges that remain, using source $s$ and the sampled edge lengths. Output vertices in the order they are discovered, accompanied by a timestamp representing the shortest
path length. 


\section{The Head of a Trace}
\label{sec:firstedge}

In this section we will deal with the head of a trace --- that is, with the edge connecting the first and the second nodes of a trace. We show that, for general graphs, that edge is the only useful information that can be extracted from traces. Moreover, and perhaps surprisingly, this information is enough to achieve close-to-optimal trace complexity, i.e., no network inference algorithm can achieve better performance than a simple algorithm that only extracts the head of the trace and ignores the rest. We analyze this algorithm in the next section.

\subsection{The First-Edge Algorithm}
\label{sec:firstedge-algo}

The First-Edge algorithm is simple to state. For each trace in the input, it extracts the edge connecting the first two nodes, and adds this edge the guessed edge set, ignoring the rest of the trace. This procedure is not only optimal in trace complexity, but, as it turns out, it is also computationally efficient.

We start by showing that First-Edge is able to reconstruct the full graph with maximum degree $\Delta$ using $\Theta(n \Delta \log n)$ traces, under the cascade model we consider. 

\begin{theorem}
Suppose that the source $s \in V$ is chosen uniformly at random. Let $G=(V,E)$ be a graph with maximum degree $\Delta \le n - 1$.
With $\Theta\left(\frac{n \Delta}p \log n\right)$ traces, First-Edge correctly returns the graph $G$ with probability at least $1 - \frac1{\poly(n)}$.
\end{theorem}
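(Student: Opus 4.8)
The plan is to recognize that First-Edge reconstructs $G$ exactly if and only if every edge of $G$ occurs as the \emph{head} (the edge between the first two nodes) of at least one of the input traces. Since the traces are mutually independent, this is a coupon-collector event: writing $\pi(e)$ for the probability that a single random trace has head $e$, the probability that First-Edge fails is at most $\sum_{e \in E}(1-\pi(e))^T \le \binom{n}{2}\max_e (1-\pi(e))^T$, where $T$ is the number of traces. So the entire argument reduces to estimating $\pi(e)$ and then choosing $T$ large enough to kill this union bound.

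To compute $\pi(e)$ for $e = \{u,v\}$, I would condition on the source, which is uniform over $V$. The head of the trace is $e$ exactly when the source is $u$ (resp.\ $v$) and $e$ is the first edge incident to $u$ (resp.\ $v$) to fire. Fix the source $u$ of degree $d_u$. Recall from the model that each incident edge is independently retained with probability $p$ and, if retained, is assigned an i.i.d.\ $\expv(\lambda)$ timer. By symmetry of the timers, conditioned on the set of retained incident edges, each retained edge is equally likely to be the one with the smallest timer. Hence the probability that $e$ fires first from $u$ is $p \cdot \expect\!\left[\frac{1}{1 + B}\right]$, where $B \sim \mathrm{Bin}(d_u - 1, p)$ counts the other retained edges; using the elementary identity $\expect[(1+\mathrm{Bin}(m,p))^{-1}] = \frac{1 - (1-p)^{m+1}}{(m+1)p}$ this equals $\frac{1-(1-p)^{d_u}}{d_u}$. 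Summing the two source choices,
\[
\pi(e) = \frac{1}{n}\left(\frac{1-(1-p)^{d_u}}{d_u} + \frac{1-(1-p)^{d_v}}{d_v}\right).
\]

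For the upper bound on trace complexity I only need a clean lower bound on $\pi(e)$. Since $1-(1-p)^{d} \ge p$ for every $d \ge 1$ and every degree is at most $\Delta$, each summand is at least $p/\Delta$, so $\pi(e) \ge \frac{p}{n\Delta}$. Plugging this into the union bound gives failure probability at most $\binom{n}{2}\exp(-Tp/(n\Delta))$, which is $1/\poly(n)$ once $T = \Theta\!\left(\frac{n\Delta}{p}\log n\right)$, as claimed. The matching necessity for First-Edge follows from the same per-edge formula read in the other direction: on a worst-case instance (e.g.\ an edge both of whose endpoints have degree $\Delta$) one upper-bounds $\pi(e)$ and invokes the coupon-collector lower bound to show that with fewer traces this hardest edge is missed with high probability.

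I expect the main obstacle to be the exact evaluation of $\pi(e)$ --- in particular justifying the reduction to the binomial expectation $\expect[(1+B)^{-1}]$, which requires the symmetry argument for the exponential timers (so that conditioning on the \emph{number} of retained edges is all that matters) together with the closed-form identity for that expectation. Once $\pi(e)$ is pinned down, the remaining coupon-collector and union-bound steps are routine; the only care needed is that the crude bound $1-(1-p)^d \ge p$ is what produces the stated $\frac{n\Delta}{p}$ factor, and that this bound is essentially tight only when both endpoints of some edge attain the maximum degree.
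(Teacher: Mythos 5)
Your proof is correct and follows essentially the same route as the paper's: lower-bound the probability that a given edge $\{u,v\}$ is the head of a single trace by $\frac{p}{n\Delta}$, then finish with a union bound over all edges using $\Theta\bigl(\frac{n\Delta}{p}\log n\bigr)$ traces. The only difference is that you evaluate the head probability exactly via $\expect\bigl[(1+\mathrm{Bin}(m,p))^{-1}\bigr] = \frac{1-(1-p)^{m+1}}{(m+1)p}$, whereas the paper simply multiplies crude bounds on the three factors (source choice, edge retention, first firing); your exact formula is the refinement the paper alludes to in its remark about the sharper $\Theta\bigl(\left(\Delta+p^{-1}\right)n\log n\bigr)$ bound.
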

\begin{proof}
Let $e = \{u,v\}$ be any edge in $E$. The probability that a trace starts with $u$, and continues with $v$ can be lower bounded by $\frac p{n\Delta}$, that is, by the product of the probabilities that $u$ is selected as the source, that the edge $\{u,v\}$ is not removed from the graph, and that $v$ is the first neighbor of $u$ that gets infected. Therefore, if we run  $c \frac{n\Delta}p \ln n$ traces, the probability that none of them starts with the ordered couple of neighboring nodes $u,v$ is at most:
$$\left(1 - \frac{p}{n \Delta}\right)^{\frac{n\Delta}p c \ln n} \le \exp(-c \ln n) = n^{-c}.$$


\noindent Therefore, the assertion is proved for any constant $c > 2$.
\end{proof}



We notice that a more careful analysis leads to a proof that $$\Theta\left(\left(\Delta + p^{-1}\right) n \log n\right)$$  traces are enough to reconstruct the whole graph with high probability. To prove this stronger assertion, it is sufficient to show the probability that a specific edge will be the first one to be traversed is at least $\frac2n \cdot \left(1-e^{-1}\right) \cdot \min\left(\Delta^{-1},p\right)$.
In fact one can even show that, for each $d \le \Delta$, if the First-Edge algorithm has access to $O\left(\left(d +p^{-1}\right) n \log n\right)$ traces, then it will recover all the edges having at least one endpoint of degree less than or equal $d$. 
As we will see in our experimental section, this allows us to reconstruct a large fraction of the edges using a number of traces that is significantly smaller than the maximum degree times the number of nodes.

\smallskip

Finally, we note that the above proof also entails that First-Edge performs as stated for any waiting time distribution (that is, not just for the exponential one). In fact, the only property that we need for the above bounds to hold, is that 
the first node, and the first neighbor of the first node, are chosen independently and uniformly at random by the process.

\subsection{Lower Bounds}\label{sec:lb}

In this section we discuss a number of lower bounds for network inference. 

\smallskip

We start by observing that if the source node is chosen adversarially --- and, say if the graph is disconnected --- no algorithm can reconstruct the graph (traces are trapped in one connected component and, therefore, do not contain any information about the rest of the graph.) 
Moreover, even if the graph is forced to be connected, by choosing $p = \frac12$ (that is, edges are traversed with probability $\frac12$) an algorithm will require at least $2^{\Omega(n)}$ traces even if the graph is known to be a path. Indeed, if we select one endpoint as the source, it will take $2^{\Omega(n)}$ trials for a trace to reach the other end of the path, since at each node, the trace flips an unbiased coin and dies out with probability $\frac12$.

This is the reason why we need the assumption that the epidemic selects $s\in V$ uniformly at random --- we recall that this is also an assumption in \cite{Gomez-Rodriguez2010Inferring}. Whenever possible, we will consider more realistic assumptions, and determine how this changes the trace complexity of the reconstruction problem.

\smallskip

We now turn our attention to our main lower bound result. Namely, even if traces never die (that is, if $p = 1$), and assuming that the source is chosen uniformly at random, we need $\tilde{\Omega}(n \Delta)$ traces to reconstruct the graph.

\smallskip

First, let $G_0$ be the clique on the node set $V = \{1,\ldots,n\}$, and let $G_1$ be the clique on $V$ minus the edge $\{1,2\}$.

Suppose that Nature selects the unknown graph uniformly at random in the set $\{G_0,G_1\}$. We will show that with $o\left(\frac{n^2}{\log^{2} n}\right)$ traces, the probability that we are able to guess the unknown graph is at most $\frac12 + o(1)$ --- that is, flipping a coin is close to being the best one can do for guessing the existence of the edge $\{1,2\}$.

\smallskip

Before embarking on this task, though, we show that this result directly entails that $o(n \cdot \frac{\Delta}{\log^2 \Delta})$ traces are not enough for reconstruction even if the graph has maximum degree $\Delta$, for each $1 \le \Delta \le n-1$. Indeed, let the graph $G'_0$ be composed of a clique on $\Delta+1$ nodes, and of $n-\Delta-1$ disconnected nodes. Let $G'_1$ be composed of a clique on $\Delta+1$ nodes, minus an edge, and of $n-\Delta-1$ disconnected nodes. Then, due to our yet-unproven lower bound, we need at least $\Omega\left(\frac{\Delta^2}{\log^2 \Delta}\right)$ traces to start in the large connected component for the reconstruction to succeed. The probability that a trace starts in the large connected component is  $O\left(\frac{\Delta}n\right)$. Hence, we need at least $\Omega\left(n\cdot \frac{\Delta}{\log^2 \Delta}\right)$ traces.

\medskip

We now highlight the main ideas that we used to prove the main lower bound, by stating the intermediate lemmas that lead to  it. The proofs of these Lemmas can be found in Appendix~\ref{app:proofs}.

\smallskip

The first lemma states  that the random ordering of nodes produced by a trace in $G_0$ is uniform at random, and that the random ordering produced by a trace in $G_1$ is ``very close'' to being uniform at random. Intuitively, this entails that one needs many traces to be able to infer the unknown graph by using the orderings given by the traces.

\begin{lemma}\label{lem:lb:comb}
Let $\pi$ be the random ordering of nodes produced by the random process on $G_0$, and $\pi'$ be the random ordering of nodes produced by the random process on $G_1$.
Then,\begin{compactenum}
\item $\pi$ is a uniform at random permutation over $[n]$;
\item for each $1 \le a < b \le n$, the permutation $\pi'$ conditioned on the vertices $1,2$ appearing (in an arbitrary order) in the positions $a,b$, is uniform at random in that set;
\item moreover, the probability $p_{a,b}$ that $\pi'$ has the vertices $1,2$ appearing (in an arbitrary order) in the positions $a < b$ is equal to
$p_{a,b} = \frac{1+ d(a,b)}{\binom{n}2}$, with\begin{compactitem}
\item  $d(a,b) = -1$ if $a=1,b=2$; otherwise $d(a,b) > -1$;
\item moreover $d_{a,b} = O\left(\frac{\ln n}{n}\right) - O\left(\frac1b\right)$.
\end{compactitem}
\item Finally, $\sum_{a = 1}^{n-1} \sum_{b=a+1}^n d(a,b) = 0$.
\end{compactenum}
\end{lemma}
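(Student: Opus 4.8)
The plan is to replace the continuous-time cascade by an equivalent discrete ``weighted sampling without replacement'' process, which makes all four parts transparent. By the memorylessness of the exponential clocks, if $S$ is the current infected set with $|S|=k$, then each uninfected vertex $v$ is being attacked by exactly $|N(v)\cap S|$ fresh independent $\expv(\lambda)$ clocks, so the next vertex to be infected is $v$ with probability proportional to $|N(v)\cap S|$. In $G_0$ every uninfected vertex has exactly $k$ infected neighbours, so the next vertex is uniform among those remaining; combined with the uniform source this proves part~1. In $G_1$ the only vertices whose infected-neighbour count can differ from $k$ are $1$ and $2$, and only while exactly one of them lies in $S$, in which case the still-uninfected special vertex has weight $k-1$ instead of $k$. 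Since the normal vertices $\{3,\dots,n\}$ always carry equal weight and the two special vertices play symmetric roles, the law of the process is invariant under $S_{\{1,2\}}\times S_{\{3,\dots,n\}}$; conditioning on $\{1,2\}$ occupying a fixed position pair $\{a,b\}$ respects this group, which acts transitively on the orderings with that property, giving part~2.

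For part~3 I would couple $G_0$ and $G_1$ on the same randomness and track the likelihood ratio of a fixed complete ordering in which the first and second special vertices occupy positions $a$ and $b$. The two dynamics agree at every step except the picks producing positions $a+1,\dots,b$: at an intermediate step (with $k$ currently infected) a normal vertex is chosen, with ratio $\frac{(n-k)k}{(n-k)k-1}$, and at the final step the lagging special vertex is chosen, with ratio $\frac{(k-1)(n-k)}{(n-k)k-1}$. Crucially, this product depends only on $(a,b)$ and not on which normal vertices appear, so, using part~1 to evaluate the $G_0$ probability,
\[
p_{a,b}=R(a,b)\cdot\Pr\nolimits_{G_0}\big[\{1,2\}\text{ at }\{a,b\}\big]=\frac{R(a,b)}{\binom{n}{2}},
\]
where
\[
R(a,b)=\Bigg(\prod_{j=a+1}^{b-1}\frac{(n-j+1)(j-1)}{(n-j+1)(j-1)-1}\Bigg)\cdot\frac{(b-2)(n-b+1)}{(n-b+1)(b-1)-1};
\]
in particular $1+d(a,b)=R(a,b)$.

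The remaining work, and the main obstacle, is the asymptotic analysis of $R(a,b)$. Writing $m_k=(n-k)k$, each intermediate factor equals $1+\frac{1}{m_k-1}$, and the partial-fraction identity $\frac{1}{m_k}=\frac{1}{n}\big(\frac{1}{k}+\frac{1}{n-k}\big)$ turns $\sum_k (m_k-1)^{-1}$ into a difference of harmonic numbers of size $O\!\left(\frac{\ln n}{n}\right)$, so the product equals $1+O\!\left(\frac{\ln n}{n}\right)$, the positive part of the claim. The final factor equals $\frac{b-2}{b-1}\cdot\frac{m_{b-1}}{m_{b-1}-1}=1-\frac{1}{b-1}+O\!\left(\frac1n\right)$, which supplies the $-O(1/b)$ term. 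The degenerate pair $(a,b)=(1,2)$ forces this final factor to vanish, since the lagging special vertex then has weight $k-1=0$ at the pick for position $2$; thus $p_{1,2}=0$ and $d(1,2)=-1$. For every other pair all factors are strictly positive, so $R(a,b)>0$ and $d(a,b)>-1$. I expect the delicate points to be the uniform control of the harmonic sums over the whole range $1\le a<b\le n$ and the treatment of boundary indices (where $m_k$ is small, e.g.\ $a,b$ near $1$ or $n$), rather than any conceptual difficulty.

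Finally, part~4 is immediate from normalisation: since $\sum_{a<b}p_{a,b}=1$ and $\sum_{a<b}\binom{n}{2}^{-1}=1$, substituting $p_{a,b}=(1+d(a,b))/\binom{n}{2}$ and summing over all pairs yields $\sum_{a<b}d(a,b)=0$.
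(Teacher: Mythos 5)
Your proof is correct, and it runs on the same engine as the paper's: both arguments reduce the continuous-time cascade to the embedded discrete chain in which, by memorylessness, the next infected vertex is chosen with probability proportional to its number of infected neighbours, and both then analyze a telescoping product of per-step pick probabilities. The difference is in how the computation is decomposed. The paper computes the absolute $G_1$-probabilities directly, conditioning on whether the source lies in $\{1,2\}$; this forces it to carry the prefix factors for the phase where both special vertices are uninfected, plus the factor for the first special pick, and then to telescope everything down to $\binom{n}{2}^{-1}(1+d(a,b))$. You instead compute the likelihood ratio $R(a,b)$ of a fixed ordering under $G_1$ versus $G_0$, so every step outside the window between positions $a$ and $b$ cancels identically (the two dynamics coincide there), and the normalizing factor $\binom{n}{2}^{-1}$ comes for free from part 1 rather than from algebra; this also makes parts 1 and 2 genuine inputs to part 3 (with a clean automorphism-group argument, $S_{\{1,2\}}\times S_{\{3,\ldots,n\}}$, for part 2) rather than afterthoughts, whereas the paper dismisses them as consequences of point 3. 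What your route buys is shorter bookkeeping and a conceptual source for the $\binom{n}{2}^{-1}$; what the paper's route buys is an explicit closed form for $p_{a,b}$ without reference to $G_0$. One remark: the points you flag as delicate are not actually delicate, since $(j-1)(n-j+1)\ge n-1$ for all $2\le j\le n$, so every denominator in your product is at least $n-2$, the harmonic-sum bound $\sum_j \frac{1}{(j-1)(n-j+1)} = O\!\left(\frac{\ln n}{n}\right)$ holds uniformly over all $1\le a<b\le n$, and there are no boundary cases to treat separately.
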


The preceding Lemma can be used to prove Lemma~\ref{lem:lb:perm}: if one is forced not to used timestamps, $o\left(\frac{n^2}{\log^2n}\right)$ traces are not enough to guess the unknown graph with probability more than $\frac12 + o(1)$. 
\begin{lemma}\label{lem:lb:perm}
Let $\mathcal{P}$ the sequence of the $\ell$ orderings of nodes given by $\ell$ traces, with $\ell = o\left(\frac{n^{2}}{\ln^2 n}\right)$.

The probability that the likelihood of $\mathcal{P}$ is higher in the  graph $G_0$ is equal to $\frac12 \pm o(1)$, regardless of the unknown graph.
\end{lemma}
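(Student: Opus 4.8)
The plan is to show that, given only $\ell=o(n^2/\ln^2 n)$ orderings, the distribution of the observed data is essentially the same whether Nature chose $G_0$ or $G_1$, so that the likelihood-ratio test — which reports ``$G_0$'' exactly when the likelihood of $\mathcal{P}$ is higher in $G_0$ — cannot beat a coin flip by more than $o(1)$. \emph{First I would set up the likelihood ratio.} Write $D_0,D_1$ for the distribution of a single ordering under $G_0,G_1$. By part~1 of Lemma~\ref{lem:lb:comb} every ordering is equally likely under $G_0$, and by part~2 the same holds under $G_1$ once the positions $(a,b)$ of vertices $1,2$ are fixed; hence the likelihood of an ordering with $1,2$ in positions $a<b$ is $1/n!$ under $G_0$ and, using part~3, $p_{a,b}\cdot\frac{1}{2(n-2)!}=\frac{1+d(a,b)}{n!}$ under $G_1$. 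Thus the per-trace likelihood ratio is exactly $1+d(a,b)$, the likelihood ratio of the whole sample is $L(\mathcal{P})=\prod_i\bigl(1+d(a_i,b_i)\bigr)$, and ``likelihood higher in $G_0$'' is the event $\{L<1\}$. Note that $D_1\ll D_0$ but not conversely: the only orderings realizable in $G_0$ yet forbidden in $G_1$ are the \emph{smoking guns} with $1,2$ in the first two positions, where $d=-1$ and the $G_1$-likelihood vanishes.

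\emph{Next I would bound a divergence between the $\ell$-trace distributions}, and the key choice is to use the chi-squared divergence in the direction $\chi^2\!\left(D_1^{\otimes\ell}\,\|\,D_0^{\otimes\ell}\right)$. This direction is chosen precisely so that the smoking-gun singularity is harmless: there $L=0$ contributes $0$ (not $\infty$) to the second moment, whereas $\kldiv{D_0}{D_1}$ or the reversed $\chi^2$ would blow up. By independence the divergence tensorizes,
\[
1+\chi^2\!\left(D_1^{\otimes\ell}\,\|\,D_0^{\otimes\ell}\right)=\left(\expect_{G_0}\bigl[(1+d)^2\bigr]\right)^{\ell},
\]
and expanding gives $\expect_{G_0}[(1+d)^2]=1+2\,\expect_{G_0}[d]+\expect_{G_0}[d^2]=1+\expect_{G_0}[d^2]$, where $\expect_{G_0}[d]=\binom{n}{2}^{-1}\sum_{a<b}d(a,b)=0$ by part~4 (positions being uniform under $G_0$ by part~1).

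\emph{The main calculation} is then to show $\expect_{G_0}[d^2]=\binom{n}{2}^{-1}\sum_{a<b}d(a,b)^2=O(\ln^2 n/n^2)$, and this is where part~3's estimate $d(a,b)=O(\ln n/n)-O(1/b)$ does the work. The $\ln^2 n/n^2$ piece summed over all $\binom{n}{2}$ pairs contributes $O(\ln^2 n)$, while the $1/b$ piece contributes only $\sum_{b}(b-1)/b^2=O(\ln n)$; so although $|d|$ is as large as a constant for the few pairs with small $b$, the sum $\sum_{a<b}d(a,b)^2$ is still only $O(\ln^2 n)$. Plugging in, $\chi^2\le\bigl(1+O(\ln^2 n/n^2)\bigr)^{\ell}-1\le\exp\!\bigl(O(\ell\ln^2 n/n^2)\bigr)-1=o(1)$ exactly because $\ell=o(n^2/\ln^2 n)$, whence $\TV\!\left(D_0^{\otimes\ell},D_1^{\otimes\ell}\right)\le\tfrac12\sqrt{\chi^2}=o(1)$.

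\emph{Finally} the conclusion follows: since the two $\ell$-trace distributions are within total variation $o(1)$, the probability of the event $\{L<1\}$ differs by at most $o(1)$ between $G_0$ and $G_1$, so whether the truth is $G_0$ or $G_1$ the test reports ``$G_0$'' with essentially the same probability and the test's advantage over a coin flip is $o(1)$; under the uniform prior this yields the claimed $\tfrac12\pm o(1)$. \emph{I expect the main obstacle to be the moment estimate} $\sum_{a<b}d(a,b)^2=O(\ln^2 n)$: one must verify that the heavy contribution of the small-$b$ pairs (where $|d|$ is large but the pairs are few) does not overwhelm the bound, and it is precisely this $\ln^2 n$ factor that forces the threshold $n^2/\ln^2 n$ rather than $n^2$. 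A secondary point to treat cleanly is the role of the smoking-gun traces — the only outcomes that could definitively reveal $G_0$; their total probability over all $\ell$ traces is $O(\ell/n^2)=o(1)$, and working with $\chi^2(D_1\,\|\,D_0)$ builds this in automatically.
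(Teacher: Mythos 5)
Your chi-squared computation is sound, and it amounts to essentially the same second-moment estimate that the paper performs ($\expect_{G_0}[d]=0$, $\expect_{G_0}[d^2]=O(\ln^2 n/n^2)$), but the final step is a non sequitur, and it is exactly where the content of the lemma lies. Small total variation between $D_0^{\otimes\ell}$ and $D_1^{\otimes\ell}$ gives $\bigl|\Pr_{G_0}[L<1]-\Pr_{G_1}[L<1]\bigr|=o(1)$; it does \emph{not} give $\Pr_{G_i}[L<1]=\tfrac12\pm o(1)$, which is what the lemma asserts. Indistinguishability can never pin this probability to $\tfrac12$: if the two distributions were identical (total variation $0$), the event $\{L<1\}$ would have probability $0$, not $\tfrac12$. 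Equivalently, your bound $\|D_0^{\otimes\ell}-D_1^{\otimes\ell}\|_{\TV}\le\tfrac12\sqrt{\chi^2}=o(1)$ says precisely that $\expect_{G_0}|L-1|=o(1)$, i.e.\ that $L$ concentrates near $1$; but the lemma is a statement about the \emph{sign} of $\log L$, which is decided by fluctuations at the much finer scale $\sqrt{C}$, where $C=\Theta(\ell\ln^2 n/n^2)$, and about these the total-variation bound is silent.

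What is missing is an anti-concentration / approximate-symmetry statement for the log-likelihood ratio, and that is the heart of the paper's proof: writing $\mathcal{R}=\ln(L_0(\mathcal{P})/L_1(\mathcal{P}))=-\sum_{a<b}\ell_{a,b}\ln(1+d(a,b))$ (after first discarding, with probability $1-o(1)$, traces whose smaller position $b$ is below $\Theta(\sqrt{\log n})$ — this also disposes of your ``smoking guns''), the paper computes first, second and third moments of $d$ under either graph and applies the Berry-Esseen theorem (Theorem~\ref{thm:be}) to conclude that $\mathcal{R}$ is approximately Gaussian with standard deviation $B=\Theta(\sqrt{C})$ while $|\expect[\mathcal{R}]|\le C=o(1)$; since $\sqrt{C}=\omega(C)$, the mean offset is negligible at the scale of the spread, whence $\Pr[\mathcal{R}>0]=\tfrac12\pm o(1)$ under either graph. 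Your argument does prove the right \emph{downstream} fact: total variation $o(1)$ between the two $\ell$-trace distributions yields (the ordering part of) Corollary~\ref{cor:lb} directly by the two-point method, arguably more cleanly than the route through Lemma~\ref{lem:lb:perm}. But as a proof of Lemma~\ref{lem:lb:perm} itself it has a genuine gap, and closing it requires adding a CLT/Berry-Esseen step for $\log L$ — at which point you have reproduced the paper's argument.
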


The next Lemma, which also needs Lemma~\ref{lem:lb:comb}, takes care of the waiting times in the timestamps. Specifically, it shows that -- under a conditioning having high probability -- the probability that the sequence of waiting times of the traces has higher likelihood in $G_0$ than in $G_1$ is $\frac12 \pm o(1)$, regardless of the unknown graph.

\begin{lemma}\label{lem:lb:waitingtimes}
Let $\alpha$ satisfy $\alpha = o(1)$, and $\alpha = \omega\left(\frac{\log n}n\right)$.
Also, let $\ell_i$ be the number of traces that have exactly one of the nodes in $\{1,2\}$ the first $i$ informed nodes.

Let $\mathcal{W}$ be the random waiting times of the traces. Then, if we condition on  $\ell_i = \Theta\left(\alpha \cdot i \cdot (n-i)\right)$  for each $i=1,\ldots, n$ (independently of the actual node permutations), the probability that the likelihood of $\mathcal{W}$ is higher in the graph $G_0$ is equal to $\frac12 \pm o(1)$, regardless of the unknown graph.
\end{lemma}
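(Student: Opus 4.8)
The plan is to reduce the statement to a central-limit estimate for the log-likelihood ratio of the observed waiting times, showing that this ratio has a mean that is negligible compared to its standard deviation, so that its sign is asymptotically an unbiased coin flip. The first step is to describe the waiting times cleanly using the memorylessness of the exponential clocks. Conditioned on the ordering $\pi$ produced by a trace (equivalently, on the increasing chain of infected sets $S_1 \subset S_2 \subset \cdots \subset S_n$), the successive waiting times $W_1,\dots,W_{n-1}$ are independent, with $W_i \sim \expv(\lambda k_i)$, where $k_i$ is the number of frontier edges leaving $S_i$: by memorylessness the residual time on every frontier edge is $\expv(\lambda)$, the time to the next infection is the minimum of $k_i$ such clocks, and the identity of the next vertex (the argmin) is independent of this minimum. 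In both $G_0$ and $G_1$ one has $k_i = i(n-i) =: m_i$, \emph{except} at the steps $i$ where exactly one of $\{1,2\}$ lies in $S_i$: at such a ``window'' step the edge $\{1,2\}$ is a frontier edge of $G_0$ but not of $G_1$, so $k_i = m_i$ in $G_0$ and $k_i = m_i - 1$ in $G_1$. Hence the two densities agree at every non-window step and cancel in the ratio, and the log-likelihood ratio of the full collection $\mathcal W$ reduces to a sum over window steps only,
$$\Lambda := \log\frac{L_{G_0}(\mathcal W)}{L_{G_1}(\mathcal W)} = \sum_{i} \ell_i \log\frac{m_i}{m_i-1} \;-\; \lambda \sum_{i}\ \sum_{t:\, i \in \mathrm{window}(t)} W_{t,i},$$
since exactly $\ell_i$ traces have a window step at $i$. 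The first sum is a deterministic constant once we condition on the values $\ell_i$, and ``$\mathcal W$ has higher likelihood in $G_0$'' is precisely the event $\Lambda > 0$.

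Next I would compute the first two moments of a single window contribution $X_{t,i} = \log\frac{m_i}{m_i-1} - \lambda W_{t,i}$. Using $\log\frac{m}{m-1} = \frac1m + \frac{1}{2m^2} + O(m^{-3})$ together with $\expect[\lambda W] = 1/k_i$, one finds $\expect[X_{t,i}] = +\frac{1}{2m_i^2}+O(m_i^{-3})$ when $G_0$ is the true graph and $\expect[X_{t,i}] = -\frac{1}{2m_i^2}+O(m_i^{-3})$ when $G_1$ is, while in either case $\mathrm{Var}[X_{t,i}] = m_i^{-2}(1+o(1))$, the variance being dominated by the exponential term (the logarithm is a constant). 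Summing over the $\ell_i$ window terms at each $i$, and using the conditioning $\ell_i = \Theta(\alpha\, i(n-i)) = \Theta(\alpha m_i)$ together with the harmonic identity $\sum_{i=1}^{n-1}\frac{1}{i(n-i)} = \frac{2}{n}H_{n-1} = \Theta\!\left(\frac{\ln n}{n}\right)$, I obtain
$$\bigl|\expect[\Lambda]\bigr| = \Theta\!\left(\sum_i \ell_i\, m_i^{-2}\right) = \Theta\!\left(\alpha \sum_i \frac{1}{i(n-i)}\right) = \Theta\!\left(\frac{\alpha \ln n}{n}\right), \qquad \mathrm{Var}[\Lambda] = \Theta\!\left(\frac{\alpha \ln n}{n}\right),$$
so $\expect[\Lambda]$ and $\mathrm{Var}[\Lambda]$ are of the same order and, crucially, both tend to $0$ under $\alpha = o(1)$.

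Finally I would conclude via the central limit theorem. Conditioned on the orderings the window contributions $X_{t,i}$ are independent; no single term dominates, since each has variance at most $(n-1)^{-2}$ while the total variance is $\Theta(\alpha\ln n/n)$, and the ratio $(n-1)^{-2}/(\alpha\ln n/n) = \Theta\!\left(1/(n\alpha\ln n)\right) \to 0$ precisely because $\alpha = \omega(\log n/n)$; this is where that lower bound on $\alpha$ enters, guaranteeing the Lyapunov condition (equivalently, that enough traces populate each window band for a normal approximation). Therefore $(\Lambda - \expect\Lambda)/\sqrt{\mathrm{Var}\Lambda}$ is asymptotically standard normal, and
$$\Pr[\Lambda > 0] = \Pr\!\left(\frac{\Lambda-\expect\Lambda}{\sqrt{\mathrm{Var}\Lambda}} > -\frac{\expect\Lambda}{\sqrt{\mathrm{Var}\Lambda}}\right) \to \Phi(0) = \tfrac12,$$
since $\bigl|\expect\Lambda\bigr|/\sqrt{\mathrm{Var}\Lambda} = \Theta\!\left(\sqrt{\alpha\ln n/n}\right) \to 0$. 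As this holds whether $G_0$ or $G_1$ is the true graph, the probability that $\mathcal W$ has higher likelihood in $G_0$ is $\frac12 \pm o(1)$ regardless of the unknown graph.

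I expect the main obstacle to be twofold. The conceptual crux is the structural reduction of the first step: establishing rigorously that, given the ordering, the waiting times are independent $\expv(\lambda m_i)$ variables and that $G_0$ and $G_1$ differ only at window steps; this is what collapses the likelihood ratio to the clean form above. The delicate quantitative point is that the ``signal'' $\expect[\Lambda]$ is \emph{not} automatically $o(\sqrt{\mathrm{Var}[\Lambda]})$ — indeed it is of the same order as the variance — so one cannot argue indistinguishability by bounding the mean alone. Instead the argument must go through $\mathrm{Var}[\Lambda]\to 0$, which forces the standardized mean $\expect[\Lambda]/\sqrt{\mathrm{Var}[\Lambda]}$ to vanish; verifying the accompanying Lyapunov/Lindeberg condition (so that the normal approximation, rather than mere concentration, is legitimate) is the most technical remaining ingredient and is exactly what pins down the two-sided constraints on $\alpha$.
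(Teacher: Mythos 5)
Your proposal is correct and follows essentially the same route as the paper's proof: both condition on the window counts $\ell_i$, use memorylessness of the exponential clocks to reduce the log-likelihood ratio to a sum of independent shifted-exponential contributions over the window steps only, compute a signal of order $\Theta(\alpha \log n / n)$ against a noise of order $\Theta(\sqrt{\alpha \log n / n})$, and finish with a normal approximation whose validity is exactly where $\alpha = \omega(\log n / n)$ enters. The only cosmetic difference is that the paper applies the quantitative Berry--Esseen theorem to the centered sum $T$ of scaled waiting times (absorbing the deterministic constant into a $(1 \pm o(1))$ multiplicative factor of the likelihood ratio), whereas you apply a Lyapunov CLT directly to $\Lambda$; the moment computations underlying the two arguments are identical.
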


Finally, the following corollary follows directly from Lemma~\ref{lem:lb:perm} and Lemma~\ref{lem:lb:waitingtimes}, and by a trivial application of the Chernoff Bound.

\begin{corollary}\label{cor:lb}
If Nature chooses between $G_0$ and $G_1$ uniformly at random, and one has access to $o\left(\frac{n^2}{\log^2 n}\right)$ traces, then no algorithm can correctly guess the graph with probability more than $\frac12 + o(1)$.\end{corollary}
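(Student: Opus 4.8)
The plan is to recast the problem as a binary hypothesis test between the distributions $P_0$ and $P_1$ of the \emph{entire} observed data (all $\ell$ traces) when the hidden graph is $G_0$ or $G_1$, respectively. Since Nature picks the two graphs with equal probability, the Bayes-optimal guesser is the maximum-likelihood test, whose success probability is exactly $\frac12 + \frac12\, d_{\TV}(P_0,P_1)$. Thus the corollary reduces to the single estimate $d_{\TV}(P_0,P_1) = o(1)$ for $\ell = o(n^2/\log^2 n)$. I would first observe that we may assume $\ell = \omega(n\log n)$: handing an algorithm extra independent traces can only increase its success probability, so padding $\ell$ up into the band $\omega(n\log n)\cap o(n^2/\log^2 n)$ is without loss of generality. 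This lets me set $\alpha := \Theta(\ell/n^2)$, which then satisfies both $\alpha = o(1)$ and $\alpha = \omega(\log n/n)$, matching the hypotheses of Lemma~\ref{lem:lb:waitingtimes}.

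Next I would split each trace into its \emph{ordering} $\pi$ and its vector of \emph{waiting times} $w$ (the gaps between consecutive infection times); the pair $(\pi,w)$ is an invertible re-encoding of the timestamped trace, so the full data is equivalent to the pair $(\mathcal{P},\mathcal{W})$ of all orderings and all waiting times. I would then invoke the standard chain-rule bound for total variation,
\[
d_{\TV}(P_0, P_1) \;\le\; d_{\TV}\!\left(P_0^{\mathcal{P}}, P_1^{\mathcal{P}}\right) \;+\; \expect_{\mathcal{P}\sim P_0^{\mathcal{P}}}\!\left[\, d_{\TV}\!\left(P_0^{\mathcal{W}\mid\mathcal{P}},\, P_1^{\mathcal{W}\mid\mathcal{P}}\right)\right],
\]
which follows from the triangle inequality after inserting the mixed measure $P_0^{\mathcal{P}}\,P_1^{\mathcal{W}\mid\mathcal{P}}$. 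Lemma~\ref{lem:lb:perm} controls the first term: the statement that the ordering-only maximum-likelihood test beats a coin flip by only $o(1)$ is precisely the assertion that $d_{\TV}(P_0^{\mathcal{P}}, P_1^{\mathcal{P}}) = o(1)$, since the optimal test succeeds with probability $\frac12 + \frac12\, d_{\TV}$.

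For the second term I would exploit the fact that, because the only difference between $G_0$ and $G_1$ is the missing edge $\{1,2\}$, the conditional law of the waiting times given the orderings depends on $\mathcal{P}$ only through the counts $\ell_i$ (the number of traces with exactly one of $\{1,2\}$ among their first $i$ infected nodes) --- this is exactly the ``independently of the actual node permutations'' clause of Lemma~\ref{lem:lb:waitingtimes}. A Chernoff bound over the $\ell$ independent traces, using $\expect[\ell_i] = \Theta(\alpha\, i(n-i))$ together with $\alpha = \omega(\log n/n)$ (so that even the smallest relevant expectations, near $i=1$ and $i=n$, are $\omega(\log n)$), shows that the concentration event $E = \{\ell_i = \Theta(\alpha\, i(n-i))\ \forall i\}$ holds with probability $1-o(1)$. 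On $E$, Lemma~\ref{lem:lb:waitingtimes} gives $d_{\TV}(P_0^{\mathcal{W}\mid\mathcal{P}}, P_1^{\mathcal{W}\mid\mathcal{P}}) = o(1)$; off $E$ I bound the conditional TV trivially by $1$. Averaging yields a second term of size $o(1)\cdot 1 + 1\cdot o(1) = o(1)$, whence $d_{\TV}(P_0,P_1)=o(1)$ and the corollary follows.

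The main obstacle, and the step I would scrutinize most, is the clean separation of the two terms: I must verify that the waiting-time likelihood couples to the orderings \emph{only} through the statistics $\ell_i$, so that conditioning on $E$ (a statement about those statistics) suffices to apply Lemma~\ref{lem:lb:waitingtimes} uniformly over all permutations consistent with $E$. This is where the special structure of the clique-minus-an-edge instance does the real work, and it is the hinge that lets two separate ``no-signal'' lemmas --- one for orderings, one for times --- be glued into a single statement about the joint data.
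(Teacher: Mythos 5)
Your proposal is correct and follows essentially the same route as the paper: Corollary~\ref{cor:lb} is obtained there precisely by combining Lemma~\ref{lem:lb:perm} (orderings) with Lemma~\ref{lem:lb:waitingtimes} (waiting times conditioned on the counts $\ell_i$) together with a Chernoff bound showing those counts concentrate, which is exactly your decomposition. Your total-variation chain rule, the Bayes-optimality reformulation of the two likelihood lemmas, and the padding of $\ell$ into the band $\omega(n\log n)\cap o\left(n^2/\log^2 n\right)$ so that $\alpha=\Theta(\ell/n^2)$ meets the hypotheses of Lemma~\ref{lem:lb:waitingtimes} are careful write-ups of glue steps the paper leaves implicit.
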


\smallskip

As already noted, the lower bound of Corollary~\ref{cor:lb} can be easily transformed in a $\Omega\left(n \cdot \frac{\Delta}{\log^2 \Delta}\right)$ lower bound, for any $ \Delta \le n -1$.

\subsection{Reconstructing the Degree Distribution}
\label{sec:degree}

In this section we study the problem of recovering the degree distribution of a hidden network and show that this can be done with $\Omega(n)$ traces while achieving high accuracy, using, again, only the first edge of a trace.

The degree distribution of a network is a characteristic structural property of networks, which influences their dynamics, function, and evolution~\cite{Newman03Thestructure}. Accordingly, many networks, including the Internet and the world wide web exhibit distinct degree distributions~\cite{Faloutsos99OnPower}. Thus, recovering this property allows us to make inferences about the behavior of processes that take place in these networks, without knowledge of their actual link structure.

Let $\ell$ traces starting from the same node $v$ be given. For trace $i$, let $t_i$ be the differences between the time of exposure of $v$, and the the time of exposure of the second node in the trace.

Recall that in the cascade model, the waiting times are distributed according to an exponential random variable with a known parameter $\lambda$. If we have $\ell$ traces starting at a node $v$, we aim to estimate the degree of $v$ the time gaps $t_1,\ldots, t_{\ell}$ between the first node and the second node of each trace.

If $v$ has degree $d$ in the graph, then $t_i$ ($1 \le i \le \ell$) will be distributed as an exponential random variable with parameter $d \lambda$~\cite{Durrett11Probability}.  Furthermore, the sum $T$ of the $t_i$'s, $T = \sum_{i=1}^{\ell} t_i$, is distributed as an Erlang random variable with parameters $(\ell, d\lambda)$~\cite{Durrett11Probability}.

\medskip

In general, if $X$ is an Erlang variable with parameters $(n, \lambda)$, and $Y$ is a Poisson variable with parameter $z \cdot \lambda$, we  have that
$\Pr\left[X < z\right] = \Pr\left[Y \ge n\right]$.
Then, by using the tail bound for the Poisson distribution~\cite{bl98, km06}, we have that the probability that $T$ is at most $(1+\epsilon) \cdot \frac{\ell}{d \lambda}$ is
$$\Pr\left[\poisv\left( (1+\epsilon) \cdot \ell\right) \ge \ell\right] \ge 1 - e^{-\Theta\left(\epsilon^2 \ell\right)}.$$

Similarly, the probability that $T$ is at least $(1-\epsilon) \cdot
\frac{\ell}{d \lambda}$ is 
$$1 - \Pr\left[\poisv((1-\epsilon) \cdot \ell) \ge \ell\right] \ge 1 - e^{-\Theta\left(\epsilon^2 \ell\right)}.$$

We then have:
$$\Pr\left[\left|T - \frac{\ell}{d \lambda}\right| \le \epsilon \cdot \frac{\ell}{d \lambda}\right] \ge 1 - e^{-\Theta\left(\epsilon^2 \ell\right)}.$$

Let our degree inference algorithm return $\hat{d} = \frac{\ell}{T \lambda}$ as the degree of $v$. Also, let $d$ be the actual degree of $v$. We have:
$$\Pr\left[\left|\hat{d} - d\right| \le \epsilon d\right] \ge 1 - e^{-\Theta\left(\epsilon^2 \ell\right)}.$$
We have then proved the following theorem:
\begin{theorem}
Provided that $\Omega\left(\frac{\ln \delta^{-1}}{\epsilon^2}\right)$ traces start from $v$, the degree algorithm returns a $1 \pm \epsilon$ multiplicative approximation to the degree of $v$ with probability at least $1 - \delta$.
\end{theorem}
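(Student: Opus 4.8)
The plan is to obtain the theorem as an immediate consequence of the concentration inequality for $\hat{d}$ already assembled in the discussion above, by choosing $\ell$ large enough that the failure probability $e^{-\Theta(\epsilon^2\ell)}$ falls below $\delta$. All of the genuinely probabilistic content — the Erlang--Poisson duality and the Poisson tail bounds — has already been carried out; what remains is a calibration of constants together with a short inversion argument passing from $T$ to $\hat{d}$.

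First I would recall the distributional setup: conditioned on $v$ having degree $d$, the gaps $t_1,\ldots,t_\ell$ are i.i.d.\ $\expv(d\lambda)$, so $T=\sum_{i=1}^\ell t_i$ is Erlang with parameters $(\ell,d\lambda)$ and mean $\frac{\ell}{d\lambda}$. The two-sided bound derived above then states that for any fixed $\epsilon\in(0,1/2)$,
$$\Pr\left[(1-\epsilon)\tfrac{\ell}{d\lambda}\le T\le(1+\epsilon)\tfrac{\ell}{d\lambda}\right]\ge 1-e^{-\Theta(\epsilon^2\ell)}.$$

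Next I would transfer this concentration to the estimator $\hat{d}=\frac{\ell}{T\lambda}$. On the good event, inverting the inequalities yields $\frac{d}{1+\epsilon}\le\hat{d}\le\frac{d}{1-\epsilon}$. Using $\frac{1}{1+\epsilon}\ge 1-\epsilon$ and $\frac{1}{1-\epsilon}\le 1+2\epsilon$ for $\epsilon\le 1/2$, this gives $|\hat{d}-d|\le 2\epsilon d$ on the good event; absorbing the factor of $2$ into a rescaled $\epsilon$ — which only alters the hidden constant in the exponent $\Theta(\epsilon^2\ell)$ — recovers exactly the claimed $(1\pm\epsilon)$ multiplicative guarantee for $\hat d$.

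Finally, to read off the stated trace count I would make the exponent explicit: there is a constant $c>0$ with failure probability at most $e^{-c\epsilon^2\ell}$, so requiring $c\,\epsilon^2\ell\ge\ln\delta^{-1}$, i.e.\ $\ell\ge\frac{\ln\delta^{-1}}{c\,\epsilon^2}=\Omega\!\left(\frac{\ln\delta^{-1}}{\epsilon^2}\right)$, drives the failure probability below $\delta$. There is no substantial obstacle in this argument; the only points needing minor care are tracking the constant hidden in $\Theta(\epsilon^2\ell)$ through the inversion step so that it can be solved for $\ell$, and restricting to $\epsilon$ bounded away from $1$ so that the $\frac{1}{1-\epsilon}$ factor stays well controlled.
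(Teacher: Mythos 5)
Your proposal is correct and follows essentially the same route as the paper: the paper likewise derives the two-sided concentration of the Erlang variable $T$ via the Poisson duality and tail bounds, passes to $\hat{d} = \frac{\ell}{T\lambda}$, and reads off the trace count from the exponent $\Theta(\epsilon^2 \ell)$. Your treatment of the inversion step (the factor-of-$2$ loss in relative error, absorbed by rescaling $\epsilon$) is a detail the paper glosses over inside its $\Theta(\cdot)$ notation, but it is the same argument.
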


\section{The Tail of the Trace}
\label{sec:tail}

A na\"{i}ve interpretation 
of the lower bound for perfect reconstruction, Corollary~\ref{cor:lb},
would conclude that the information in the ``tail'' of the trace 
--- the list of
nodes infected after the first two nodes, 
and their timestamps ---
is of negligible use in achieving the task of 
perfect reconstruction. In this section we will see
that the opposite conclusion holds for important
classes of graphs. 
We specialize to two such classes, trees and
bounded-degree graphs, in both cases designing
algorithms that rely heavily on information in the
tails of traces to achieve perfect reconstruction
with trace complexity $O(\log n)$,
an exponential improvement from the worst-case
lower bound in Corollary~\ref{cor:lb}. The 
algorithms are quite different: for trees
we essentially perform maximum likelihood estimation
(MLE) of the entire edge set all at once,
while for bounded-degree graphs we run MLE separately
for each vertex to attempt to find its set of neighbors,
then we combine those sets while resolving
inconsistencies. 

In Section~\ref{sec:exp}
we provide one more example of an algorithm,
which we denote by First-Edge$+$, 
that makes use of information in the tail
of the trace. Unlike the algorithms in this
section, we do not know of a theoretical
performance guarantee for First-Edge$+$
so we have instead analyzed it experimentally.

It is natural to compare the algorithms in this
section with the \NetInf algorithm~\cite{Gomez-Rodriguez2010Inferring},
since both are based on MLE. While \NetInf is a general-purpose
algorithm, and the algorithms developed here are limited to
special classes of graphs, we believe our approach offers
several advantages.
First, and most importantly,
we offer provable trace complexity guarantees: $\Omega(\log n)$
complete traces suffice for perfect reconstruction of a tree
with high probability, and $\Omega(\poly(\Delta) \log n)$ traces
suffice for perfect reconstruction of a graph with maximum degree
$\Delta$. Previous work has not provided rigorous guarantees on
the number of traces required to ensure that algorithms achieve
specified reconstruction tasks. Second, our tree reconstruction
algorithm is simple 
(an easy preprocessing step followed by computing a minimum
spanning tree) and has worst-case running time $O(n^2 \ell)$,
where $n$ is the number of nodes and  $\ell = \Omega(\log n)$ 
is the number of traces, which compares favorably with the
running time of \NetInf.

\subsection{Reconstructing Trees}
\label{sec:trees}

In this section we consider the special case in which the underlying
graph $G$ is a tree, and we provide a simple algorithm that requires
$\Omega(\log n)$ complete traces and succeeds in perfect reconstruction
with high probability. Intuitively, reconstructing trees is much
simpler than reconstructing general graphs for the following reason.
As noted in~\cite{Gomez-Rodriguez2010Inferring}, the probability
that an arbitrary 
graph $G$ generates trace $T$ is a sum, over all spanning trees
$F$ of $G$, of the probability that $T$ was generated by an epidemic
propagating along the edges of $F$. When $G$ itself is a tree, this
sum degenerates to a single term and this greatly simplifies the
process of doing maximum likelihood estimation.
In practical applications of the network inference problem,
it is unlikely that the latent network will be a tree; 
nevertheless we believe the results in this section are of
theoretical interest and that they may provide a roadmap for
analyzing the trace complexity of other algorithms
based on maximum likelihood estimation.

\begin{algorithm}
\begin{algorithmic}[1]
\REQUIRE A collection $T_1,\ldots,T_\ell$ of complete traces generated by repeatedly
running the infection process with $p=1$ on a fixed tree.\\
Let $t_i(v)$ denote the infection time of node $v$ in trace $T_i$. 
\ENSURE An estimate, $\hat{G}$, of the tree.
\FORALL{pairs of nodes $u,v$}
\STATE Let $c(u,v)$ be the median of the set $\{|t_i(u)-t_i(v)|\}_{i=1}^{\ell}$.
\IF{$\exists$ a node $p$ and a pair of traces $T_i, T_j$
such that $t_i(p) < t_i(u) < t_i(v)$ and $t_j(p) < t_j(v) < t_j(u)$}
\STATE Set $c(u,v)=\infty$.
\ENDIF
\ENDFOR
\STATE Output $\hat{G} =$ minimum spanning tree
with respect to cost matrix  $c(u,v)$.
\end{algorithmic}
\caption{The tree reconstruction algorithm.}\label{alg:tree}
\end{algorithm}

The tree reconstruction algorithm is very simple. It defines a
cost for each edge $\{u,v\}$ as shown in Figure~\ref{alg:tree},
and then it outputs the minimum spanning tree with respect to
those edge costs. The most time-consuming step is the test in step 3,
which checks whether there is a node $p$ whose infection time
precedes the infection times of both $u$ and $v$ in two
distinct traces $T_i,T_j$ such that the infection times of $u$ and $v$
are oppositely ordered in $T_i$ and $T_j$. 
(If so, then $G$ contains a path from
$p$ to $u$ that does not include $v$, and a path from $p$ to $v$
that does not include $u$, and consequently $\{u,v\}$ cannot be 
an edge of the tree $G$. This justifies setting 
$c(u,v)=\infty$ in step 4.) 
To save time, one can use lazy evaluation 
to avoid performing
this test for every pair $u,v$. The lazy version of the algorithm
computes edge costs 
$c(u,v)$ as in step 3 and then proceeds straight to the minimum
spanning tree computation, using Kruskal's algorithm. 
Any time Kruskal's algorithm decides to insert an
edge $\{u,v\}$ into the tree, we instead perform the test
in step 3 and delete edge $\{u,v\}$ from the graph if it violates
the test.

The analysis of the algorithm is based on the following outline:
first, we show that if $\{u,v\}$ is any edge of $G$, then
$c(u,v) < \lambda^{-1}$ with high probability (Lemma~\ref{lem:tree-edge}). 
Second, we show that if 
$\{u,v\}$ is any edge not in $G$, then $c(u,v) > \lambda^{-1}$ with
high probability (Lemma~\ref{lem:non-tree-edge}). The edge pruning
in steps 3 and 4 of the algorithm is vital for attaining the
latter high-probability guarantee. 
When both of these high-probability 
events occur, it is trivial to see that the minimum
spanning tree coincides with $G$.

\begin{lemma} \label{lem:tree-edge}
If $\{u,v\}$ is an edge of the tree $G$, then 
Algorithm~\ref{alg:tree} sets $c(u,v) < \lambda^{-1}$
with probability at least $1-{c_1}^{\lambda}$, 
for some absolute constant $c_1 < 1$.
\end{lemma}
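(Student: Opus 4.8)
The plan is to exploit the fact that, on a tree run with $p=1$, the quantity $|t_i(u)-t_i(v)|$ has a very simple distribution whenever $\{u,v\}$ is a genuine edge. First I would argue that the pruning test in steps 3--4 never fires on a true edge, so that $c(u,v)$ really is the sample median and not $\infty$. Indeed, deleting $\{u,v\}$ splits the tree into components $C_u \ni u$ and $C_v \ni v$; if the pruning condition held, the witnessing node $p$ would have to be infected before $u$ in trace $T_i$ (forcing $p \in C_u$, since every node of $C_v$ is infected only after the epidemic crosses $\{u,v\}$) and simultaneously before $v$ in $T_j$ (forcing $p \in C_v$), a contradiction. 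This is exactly the contrapositive of the justification already given for steps 3--4, so I can simply invoke it.

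Next I would pin down the distribution of $|t_i(u)-t_i(v)|$ for a tree edge. Since $p=1$ the infection tree is all of $G$, and in any single trace the infection reaches every node along the unique path from the source; hence of the two endpoints $u,v$ the later-infected one is infected directly from the earlier one across the edge $\{u,v\}$. The countdown on this edge is always started, so the gap $|t_i(u)-t_i(v)|$ equals the $\expv(\lambda)$ waiting time drawn for that edge in trace $i$, independently of which side of the edge the source happens to lie on. Across the $\ell$ traces these waiting times are independent, so $c(u,v)$ is the median of $\ell$ i.i.d.\ $\expv(\lambda)$ samples.

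Finally I would reduce the statement to a routine concentration bound. The key numerical fact is that the threshold $\lambda^{-1}$ sits strictly above the true median $\lambda^{-1}\ln 2$ of the exponential: a single sample lies below $\lambda^{-1}$ with probability $q = 1 - e^{-1} > \tfrac12$. The sample median is below $\lambda^{-1}$ precisely when more than half of the $\ell$ samples fall below $\lambda^{-1}$, i.e.\ when a $\mathrm{Binomial}(\ell,q)$ count exceeds $\ell/2$. Since $q - \tfrac12 = \tfrac12 - e^{-1}$ is a positive absolute constant, a Chernoff/Hoeffding bound gives failure probability at most $\exp(-2\ell(q-\tfrac12)^2) = c_1^{\ell}$ with $c_1 = \exp(-2(\tfrac12 - e^{-1})^2) < 1$. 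This yields the claimed guarantee that $c(u,v) < \lambda^{-1}$ with probability at least $1 - c_1^{\ell}$; note the exponent is the number of traces $\ell$ (which is what makes the bound strong once $\ell = \Omega(\log n)$, and allows a union bound over all pairs), with $\lambda$ playing no role beyond setting the scale.

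I do not expect a serious obstacle here: the only subtle points are the distributional identification in the second step (that the time-gap across a true edge is exactly its exponential countdown) and the verification that pruning cannot discard a true edge. Once those are in hand the result is pure concentration of a sample median around a value strictly below the comparison threshold.
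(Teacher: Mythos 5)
Your proposal is correct and follows essentially the same route as the paper's own proof: the same component-splitting argument showing the pruning test in steps 3--4 never fires on a true edge, the same identification of $|t_i(u)-t_i(v)|$ as i.i.d.\ $\expv(\lambda)$ samples across traces, and the same Chernoff-type bound on the sample median (you use the additive Hoeffding form, the paper the multiplicative form yielding $(\sqrt{2}\,e^{1/e})^{-\ell}$; both give an absolute constant $c_1<1$). You are also right that the exponent is the number of traces $\ell$ rather than $\lambda$ --- the ``$c_1^{\lambda}$'' in the lemma statement is a typo, as confirmed by how the lemma is invoked in Theorem~\ref{thm:tree}.
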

\begin{proof}
First, note that the algorithm never sets $c(u,v)=\infty$.
This is because if one were to delete edge $\{u,v\}$ from $G$, it 
would disconnect the graph into two connected components 
$G_u, G_v$, containing $u$ and $v$, respectively.
The infection process cannot 
spread from $G_u$ to $G_v$ or vice-versa without 
traversing edge $\{u,v\}$. Consequently, for every
node $p \in G_u$, the infection time $t_i(u)$ occurs
strictly between $t_i(p)$ and $t_i(v)$ in all traces.
Similarly, if $p \in G_v$ then the infection time 
$t_i(v)$ occurs strictly between $t_i(p)$ and $t_i(u)$
in all traces.

Therefore, the value of $c(u,v)$ is equal to the median
of $|t_i(u)-t_i(v)|$ over all the traces $T_1,\ldots,T_\ell$.
In any execution of the infection process, if the first
endpoint of edge $\{u,v\}$ becomes infected at time $t$,
then the opposite endpoint receives a timestamp $t+X$
where $X \sim \expv(\lambda)$. Consequently the random
variable $|t_i(u)-t_i(v)|$ is an independent sample
from $\expv(\lambda)$ in each trace.
The probability 
that any one of these samples is greater than $\lambda^{-1}$
is $1/e$, so the probability that their median exceeds
$\lambda^{-1}$ is equal to the probability of observing
at least $\ell/2$ heads in $\ell$ tosses of a coin with
bias $1/e$. By Chernoff's bound~\cite{MR95}, this is less than
$(\sqrt{2} e^{1/e})^{-\ell}$. 
\end{proof}

The remaining step in analyzing the tree reconstruction
algorithm is to prove that $c(u,v) > \lambda^{-1}$
with high probability when $\{u,v\}$ is not an edge 
of the tree $G$. 
\begin{lemma} \label{lem:non-tree-edge}
If $\{u,v\}$ is not an edge of $G$, then
Algorithm~\ref{alg:tree} sets $c(u,v) > \lambda^{-1}$
with probability at least $1 - c_2 \cdot c_3^{\ell}$ for some
absolute constants $c_2 < \infty$ and $c_3 < 1$.
\end{lemma}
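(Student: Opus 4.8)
The plan is to fix a non-edge $\{u,v\}$ and work with the unique tree path $u = w_0, w_1, \ldots, w_k = v$ joining them; since $\{u,v\} \notin G$ we have $k \ge 2$. Because $p=1$, in every trace each tree edge is traversed exactly once, so in trace $T_i$ the infection enters this path at a single vertex $w_j$ — the vertex of the path closest to the source $s_i$ — and then propagates toward $u$ and toward $v$ along disjoint edge sets. Writing $X_1,\ldots,X_k \sim \expv(\lambda)$ for the (fresh, independent) traversal times of the path edges $\{w_{m-1},w_m\}$, this yields $t_i(u) - t_i(v) = S_i - T_i$, where $S_i = X_1 + \cdots + X_j$ collects the edges between the entry vertex and $u$ and $T_i = X_{j+1} + \cdots + X_k$ collects those toward $v$. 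Hence each sample $|t_i(u) - t_i(v)| = |S_i - T_i|$ is the absolute difference of two independent Erlang variables whose shape parameters $(j, k-j)$ are determined by where the source lands, with the law of $j$ governed by the sizes of the subtrees hanging off the path.

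First I would establish the ``median'' mechanism. I would show that for every split shape other than the balanced pair $(1,1)$ — that is, for endpoint entry (shape $(0,k)$ or $(k,0)$) and for every internal split of a path of length $k \ge 3$ — one has $P\!\left(|S_i - T_i| > \lambda^{-1}\right) \ge \tfrac12 + \delta$ for an absolute constant $\delta > 0$. For endpoint entry this is $P\!\left(\mathrm{Erlang}(k,\lambda) > \lambda^{-1}\right) = P\!\left(\poisv(1) \le k-1\right) \ge 2/e$, while for internal splits of a path with $k \ge 3$ a short computation (the least favorable cases being the most balanced splits, e.g.\ $(1,2)$) keeps the value bounded above $\tfrac12$. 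It follows that whenever $k \ge 3$, or $k = 2$ but the probability $b$ that the source enters at the middle vertex lies below a fixed threshold, the $\ell$ i.i.d.\ samples each exceed $\lambda^{-1}$ with probability at least $\tfrac12 + \delta$; a Chernoff bound then shows the empirical median exceeds $\lambda^{-1}$ except with probability $c_3^{\ell}$, and since the algorithm only ever raises $c(u,v)$ to $\infty$, this gives $c(u,v) > \lambda^{-1}$ in this regime.

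The one genuinely dangerous configuration — and the reason the pruning in steps 3–4 is indispensable — is $k = 2$ (path $u, w, v$) when the middle vertex $w$ is the entry point with large probability $b$: there $|S_i - T_i| = |X_1 - X_2| \sim \expv(\lambda)$, whose median $\lambda^{-1}\ln 2 < \lambda^{-1}$ makes the median test fail outright. Here I would invoke pruning with $p = w$: whenever the source enters at $w$, the vertex $w$ precedes both $u$ and $v$, and by symmetry $u$ precedes $v$ with probability exactly $\tfrac12$. Thus each trace independently produces the pattern $w \prec u \prec v$ with probability $b/2$ and the pattern $w \prec v \prec u$ with probability $b/2$; once $b$ exceeds the threshold of the first regime, a union bound over these two ``coupon'' events shows both patterns occur among the $\ell$ traces except with probability $2\,(1 - b/2)^{\ell} \le c_2 c_3^{\ell}$, which triggers step 4 and sets $c(u,v) = \infty > \lambda^{-1}$.

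Finally I would package the two regimes: choosing the threshold on $b$ so that below it the median bound applies and above it the pruning bound applies, the failure event $\{c(u,v) \le \lambda^{-1}\}$ is contained in the corresponding rare event of whichever regime we are in, giving $P\!\left(c(u,v) \le \lambda^{-1}\right) \le c_2 c_3^{\ell}$ for absolute constants $c_2 < \infty$, $c_3 < 1$. I expect the main obstacle to be the bookkeeping that pins down a single absolute constant $\delta$ valid uniformly across all split shapes $(j, k-j)$ and all $k$ — in particular, verifying that the balanced length-two split is the \emph{unique} shape whose per-sample probability falls below $\tfrac12$, and quantifying the margin for every other shape. That step carries the real content, whereas the Chernoff and coupon-collector estimates are routine.
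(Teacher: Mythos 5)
Your proposal is correct, and it reaches the lemma by a genuinely different decomposition than the paper's, while using the same two mechanisms (a per-trace probability bound exceeding $1/2$ followed by a Chernoff bound on the median, and the step-3 pruning with the branch vertex as the witness $p$). The paper's case split is \emph{empirical}: conditioning on the realized sources $s_1,\ldots,s_\ell$, either at least $\ell/10$ traces enter $P(u,v)$ at its midpoint --- in which case, by symmetry, both orderings of $t(u),t(v)$ occur among those traces with probability at least $1-2\cdot 2^{-\ell/10}$ and pruning fires, whatever the path length --- or at least $9\ell/10$ traces enter off-center, and the median argument is run on the mixture, charging only the weak bound $1/e$ to balanced traces and $1.5/e$ to unbalanced ones (expected fraction of successes $1.45/e > 1/2$). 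Your split is instead \emph{structural}, on the path length $k$ and the middle-entry probability $b$, and this has two consequences. First, you must assume i.i.d.\ sources for $b$ to be well defined, whereas the paper's dichotomy is conditional on an arbitrary source sequence and so needs no distributional assumption on sources at all. Second, because you route the balanced shapes $(m,m)$ with $m\ge 2$ through the median test rather than through pruning, your proof is obligated to the uniform claim that every split shape other than $(1,1)$ has per-sample probability at least $1/2+\delta$ --- precisely the step you flag as carrying the real content. That claim is true: the minimum is $1.5/e \approx 0.55$, attained at the shapes $(1,2)$ and $(2,2)$, and it can be proven by extending the paper's ``$q$'' argument (showing $\Pr(q\ge 2)\ge 1/2$): for unbalanced shapes this is the paper's equidistant-vertex symmetry trick verbatim, and for balanced $(m,m)$ with $m \ge 2$ a competing--Poisson-process computation gives $\Pr(q \ge 2) = 1 - \binom{2m-2}{m-1}2^{-(2m-2)} \ge 1/2$, whence the per-sample probability is at least $\tfrac12\cdot\tfrac1e + \tfrac12\cdot\tfrac2e = 1.5/e$. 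So your route buys a sharper structural insight --- pruning is indispensable exactly and only for the length-two symmetric configuration --- at the price of this extra uniform estimate, which is the work the paper's empirical dichotomy was engineered to avoid.
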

\begin{proof}
$G$ is a tree, so for any two nodes $u,v$, there is a 
unique path $P(u,v)$ in $G$ that starts at $u$ and 
ends at $v$. Furthermore, for every $s \in G$, there
is a unique node $z(s) \in P(u,v)$ such that the
paths $P(s,u)$ and $P(s,v)$ are identical up until
they reach $z(s)$, and they are vertex-disjoint afterward.
When the infection process starts at $s$ and
spreads throughout $G$, it always holds that
$t(z(s)) \leq \min \{t(u),t(v)\}$. Conditional
on the value of $t(z(s))$, the infection times
of vertices on the paths $P(z(s),u)$ and 
$P(z(s),v)$ constitute two independent Poisson
processes each with rate $\lambda$. 
Let $n_u(s)$ and $n_v(s)$ denote the number of edges in the 
paths $P(z(s),u)$ and $P(z(s),v)$, respectively.
The infection
times $t(u), t(v)$ occur at the $n_u(s)^{\mathrm{th}}$
and $n_v(s)^{\mathrm{th}}$ arrival times, respectively,
in the two independent Poisson processes.

Let $s_1,\ldots,s_\ell$ denote the sources of traces
$T_1,\ldots,T_\ell$. We distinguish two cases. First,
suppose at least $\frac{\ell}{10}$ of the traces satisfy 
$n_u(s_i)=n_v(s_i)$. In any of these traces, the events
$t_i(u) < t_i(v)$ and $t_i(v) < t_i(u)$ both have 
probability $1/2$, by symmetry. 
Hence, with probability
at least $1-2 \, \cdot \, 2^{-\ell/10}$, there exist traces 
$T_i, T_j$ such that $z(s_i), z(s_j)$ are both
equal to the midpoint of the path $P(u,v)$, but
$t_i(u) < t_i(v)$ whereas $t_j(v) < t_j(u)$. 
If this high-probability event happens, the 
condition in step 3 of the algorithm will be
satisfied with $p = z(s_i) = z(s_j)$ and the 
cost $c(u,v)$ will be set to $\infty$.

The remaining case is that at least $\frac{9\ell}{10}$
of the traces satisfy $n_u(s_i) \neq n_v(s_i)$. 
In this
case, we reason about the distribution of $|t_i(u)-t_i(v)|$
as follows. 
Let $q$ denote the number of uninfected nodes on
path $P$ at the time $t$ when an element of $\{u,v\}$ is first
infected. Conditional on the value of $t$,
the remaining infection times of the nodes
on path $P$ are the arrival times in a Poisson process
of rate $\lambda$. The conditional probability that
$|t_i(u)-t_i(v)| > \lambda^{-1}$, given $q$, is
therefore equal to the probability that a $\poisv(1)$
random variable is less than $q$. This conditional
probability is equal to $1/e$ when $q=1$ and 
is at least $2/e$ when $q > 1$. (The value of $q$
is always at least 1, because at time $t$ exactly
one element of $\{u,v\}$ is infected and the other 
is not yet infected.) 

When $n_u(s_i) \neq n_v(s_i)$,
we claim that $\Pr(q>1)$ is at least 1/2. To see why,
assume without loss of generality that 
$n_u(s_i) < n_v(s_i)$ and let $x$ be the 
node on path $P(u,v)$ 
such that $x \neq u$ but $u$ and $x$ are equidistant
from $z(s_i)$. (In other words, the paths
$P(z(s_i),x)$ and $P(z(s_i),u)$ have the same
number of edges.) By symmetry, the events
$t_i(u) < t_i(x)$ and $t_i(x) < t_i(u)$ both
have probability 1/2. Conditional on the event
$t_i(u) < t_i(x)$, we have $q > 1$ because 
$x, v$ are two distinct nodes that are
uninfected at time $t_i(u)$. Consequently,
$\Pr(q>1) \geq 1/2$ as claimed.

Now let us combine the conclusions of the preceding two
paragraphs. For notational convenience, we use
$t_i^{uv}$ as shorthand for $|t_i(u)-t_i(v)|$. 
When $n_u(s_i) \neq n_v(s_i)$
we have derived:
\begin{align*}
\Pr(t_i^{uv} > \lambda^{-1}) &=
\Pr(t_i^{uv} > \lambda^{-1} \,\mid\, q=1) \Pr(q=1)
 + 
\Pr(t_i^{uv} > \lambda^{-1} \,\mid\, q>1) \Pr(q>1) \\
&\geq
\tfrac12 \left( \tfrac{1}{e} \right) +
\tfrac12 \left( \tfrac{2}{e} \right) = \tfrac{1.5}{e}.
\end{align*}
When $n_u(s_i) = n_v(s_i)$ we have derived:
\begin{align*}
\Pr(t_i^{uv} > \lambda^{-1}) &\geq
\Pr(t_i^{uv} > \lambda^{-1} \,\mid\, q=1) = \tfrac{1}{e}.
\end{align*}
Recall that $c(u,v)$ is the median of $t_i^{uv}$
for $i=1,\ldots,\ell$. The probability that this 
median is less than $\lambda^{-1}$ is 
bounded above by the probability of observing
fewer than $\ell/2$ heads when tossing $\ell/10$
coins with bias $\frac{1}{e}$ and $9 \ell/10$ coins
with bias $\frac{1.5}{e}$. The expected number
of heads in such an experiment is $\frac{0.1 + (0.9)(1.5)}{e} = 
\frac{1.45}{e} > \frac{8}{15}$. Once again applying 
Chernoff's bound (to the random variable that counts
the number of \emph{tails}) the probability that at least 
$\ell/2$ tails are observed is bounded above by 
$\left( \frac{14}{15} e^{1/15} \right)^{\ell/2} < (0.999)^{\ell}.$
\end{proof}

Combining Lemmas~\ref{lem:tree-edge} and~\ref{lem:non-tree-edge},
and using the union bound, we find that with probability
at least $1 - (n-1)c_1^{\ell} - \binom{n-1}{2} c_2 c_3^{\ell}$,
the set of pairs $(u,v)$ such
that $c(u,v) < \lambda^{-1}$ coincides with the set of edges
of the tree $G$. Whenever the $n-1$ cheapest edges in a graph
form a spanning tree, it is always the minimum
spanning tree of the graph.
Thus, we have proven the following theorem.

\begin{theorem} \label{thm:tree}
If $G$ is a tree, then Algorithm~\ref{alg:tree}
perfectly reconstructs $G$ with probability
at least $1 - (n-1)c_1^{\ell} - \binom{n-1}{2} c_2 c_3^{\ell}$,
for some absolute constants $c_1, c_3 < 1$ and $c_2 < \infty$.
This probability can be made greater than $1 - 1/n^c$,
for any specified $c>0$, by using $\ell \geq c_4 \cdot c  \cdot \log n$
traces, where $c_4 < \infty$ is an absolute constant.
\end{theorem}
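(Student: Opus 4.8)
The plan is to assemble the theorem from the two preceding lemmas by a union bound, reduce the reconstruction guarantee to a deterministic statement about minimum spanning trees, and then convert the resulting exponential-in-$\ell$ error bound into the claimed $O(\log n)$ trace complexity. First I would define the \emph{good event} $\mathcal{E}$ to be the conjunction of the events $c(u,v) < \lambda^{-1}$ ranging over all $n-1$ edges $\{u,v\}$ of $G$, together with the events $c(u,v) > \lambda^{-1}$ ranging over all non-edges. By Lemma~\ref{lem:tree-edge} each edge violates its bound with probability at most $c_1^{\ell}$, and by Lemma~\ref{lem:non-tree-edge} each non-edge violates its bound with probability at most $c_2 c_3^{\ell}$. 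Since a tree on $n$ vertices has exactly $n-1$ edges and hence $\binom{n}{2} - (n-1) = \binom{n-1}{2}$ non-edges, a union bound yields $\Pr[\mathcal{E}] \ge 1 - (n-1)c_1^{\ell} - \binom{n-1}{2} c_2 c_3^{\ell}$, which matches the probability in the statement.

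Next I would argue that on $\mathcal{E}$ the algorithm's output $\hat{G}$ equals $G$. On this event the set of pairs $\{u,v\}$ with $c(u,v) < \lambda^{-1}$ is precisely the edge set $E$ of $G$, every other pair is strictly more expensive, and these $n-1$ cheapest edges form a spanning tree of $K_n$ (as $G$ is a connected tree on all $n$ vertices). The remaining ingredient is the deterministic fact that whenever the $n-1$ cheapest edges of a weighted complete graph form a spanning tree, that tree is the unique minimum spanning tree. I would justify this via Kruskal's algorithm: it scans edges in increasing order of cost and discards an edge only when it closes a cycle; since the $n-1$ cheapest edges already constitute an acyclic spanning subgraph, none of them is ever discarded, so Kruskal selects exactly this edge set. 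Hence $\hat{G} = G$ whenever $\mathcal{E}$ holds, establishing the first displayed probability bound.

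Finally I would pass to the high-probability form. Setting $c_* = \max(c_1,c_3) < 1$, the failure probability is at most $(n-1)c_1^{\ell} + \binom{n-1}{2} c_2 c_3^{\ell} \le C n^2 c_*^{\ell}$ for an absolute constant $C$. To force this below $n^{-c}$ it suffices that $c_*^{\ell} \le n^{-(c+2)}/C$, i.e. $\ell \ge \frac{(c+2)\ln n + \ln C}{\ln(1/c_*)}$. Because $\ln(1/c_*)$ is a positive absolute constant and $\ln C$ is dominated by $\ln n$ for $n \ge 2$, there is an absolute constant $c_4$ for which $\ell \ge c_4 \cdot c \cdot \log n$ implies the inequality above throughout the relevant regime where $c$ is bounded below by a fixed constant; this is the stated trace complexity.

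Because the two lemmas already carry out the probabilistic heavy lifting — in particular the pruning argument of Lemma~\ref{lem:non-tree-edge}, which is exactly what forces non-edges to receive large cost — I do not expect a genuine obstacle here, and the argument is largely bookkeeping. The only two points that demand care are the precise count of non-edges (the identity $\binom{n}{2} - (n-1) = \binom{n-1}{2}$) and the minimum-spanning-tree step, where one must invoke that the $n-1$ cheapest edges \emph{form a spanning tree} rather than merely a forest, so that Kruskal is compelled to select precisely this edge set and no cheaper substitution is possible.
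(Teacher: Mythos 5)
Your proposal is correct and follows essentially the same route as the paper: a union bound combining Lemma~\ref{lem:tree-edge} over the $n-1$ tree edges and Lemma~\ref{lem:non-tree-edge} over the $\binom{n-1}{2}$ non-edges, followed by the observation that when the $n-1$ cheapest pairs form a spanning tree it must be the minimum spanning tree, hence $\hat{G}=G$. The paper states this argument in a single terse paragraph, so your Kruskal justification of the MST step and the explicit computation converting the exponential failure bound into the $\ell \geq c_4 \cdot c \cdot \log n$ trace complexity are merely fuller write-ups of the same proof.
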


\subsection{Bounded-Degree Graphs}
\label{sec:bdd-degree}

In this section, we show that $O(\poly(\Delta) \log n)$ complete
traces suffice for perfect reconstruction (with high probability)
when the graph $G$ has maximum degree $\Delta$. In fact, our
proof shows a somewhat stronger result: it shows that for any
pair of nodes $u,v$, there is an 
algorithm that predicts whether $\{u,v\}$ is an edge of $G$ 
with failure probability at most $1-1/n^c$, for any 
specified constant $c>0$, and the algorithm requires only
$\Omega(\poly(\Delta) \log n)$ independent partial
traces in which $u$ and $v$ are both infected.
However, for simplicity we will assume complete traces
throughout this section. 

\begin{algorithm}
\begin{algorithmic}[1]
\REQUIRE 
An infection rate parameter, $\lambda$.\\
A set of vertices, $V$.\\
An upper bound, $\Delta$, on the degrees of vertices.\\
A collection $T_1,\ldots,T_\ell$ of
complete traces generated by repeatedly
running the infection process on a fixed graph $G$ with vertex
set $V$ and maximum degree $\Delta$.\\
Let $t_i(v)$ denote the infection time of node $v$ in trace $T_i$.
\ENSURE An estimate, $\hat{G}$, of $G$.
\FORALL{nodes $u$}
\FORALL{sets $S \subseteq V \setminus \{u\}$ of at most $\Delta$ vertices}
\FORALL{traces $T_i$} 
\STATE Let $S_i^u = \{v \in S \mid t_i(v) < t_i(u)\}$.
\IF{$S_i^u = \emptyset$}
\STATE Let $\lscore_i(S,u) = 0$ if $u$ is the source of $T_i$,
otherwise $\lscore_i(S,u) = -\infty$.
\ELSE
\STATE $\lscore_i(S,u) = \log |S_i^u| - \lambda \sum_{v \in S_i^u} [t_i(u)-t_i(v)]$.
\ENDIF
\ENDFOR
\STATE Let $\lscore(S,u) = \ell^{-1} \cdot \sum_i \lscore_i(S,u)$.
\ENDFOR
\STATE Let $R(u) = \argmax \{\lscore(S,u)\}.$
\ENDFOR
\FORALL{ordered pairs of vertices $u,v$}
\IF{$t_i(v) < t_i(u)$ in at least
$\ell/3$ traces and $v \in R(u)$}
\STATE Insert edge $\{u,v\}$ into $\hat{G}$.
\ENDIF
\ENDFOR
\STATE Output $\hat{G}$.
\end{algorithmic}
\caption{Bounded-degree reconstruction algorithm.}\label{alg:bdd-degree}
\end{algorithm}

The basic intuition behind our algorithm can be summarized
as follows. To determine if $\{u,v\}$ is an edge of $G$, we 
try to reconstruct the entire set of neighbors of $u$ and
then test if $v$ belongs to this set. We use the following
insight to test whether a candidate set $S$ is equal to the
set $N(u)$ of all neighbors of $u$. Any such set defines a 
``forecasting model'' that specifies a probability distribution for the 
infection time $t(u)$. To test the validity of the forecast
we use a strictly proper scoring rule~\cite{Gneiting07Strictly}, 
specifically the logarithmic scoring rule, which is defined formally 
in the paragraph following Equation~\eqref{eq:kldiv-continuous}.
Let us say that a set $S$ differs
significantly from the set of neighbors of $u$ (henceforth
denoted $N(u)$) if the symmetric difference $S \oplus N(u)$ contains a vertex
that is infected before $u$ with constant probability.
We prove that the expected score assigned to $N(u)$ by the
logarithmic scoring rule is at least $\Omega(\Delta^{-4})$ greater
than the score assigned to any set that differs significantly from
$N(u)$. Averaging over $\Omega(\Delta^4 \log \Delta \log n)$
trials is then sufficient to ensure that all sets differing 
significantly from $N(u)$ receive strictly smaller
average scores.

The scoring rule algorithm thus succeeds (with high probability) 
in reconstructing a 
set $R(u)$ whose difference from $N(u)$ is insignificant, meaning
that the elements of $R(u) \oplus N(u)$ are usually infected
after $u$. To test if edge $\{u,v\}$ belongs to $G$, we can now
use the following procedure: if the event $t(v) < t(u)$ occurs
in a constant fraction of the traces containing both $u$ and $v$, then
we predict that edge $\{u,v\}$ is present if $v \in R(u)$;
this prediction must be correct with high probability, as otherwise
the element $v \in R(u) \oplus N(u)$ would constitute a significant
difference. Symmetrically, if $t(u) < t(v)$ occurs in a constant
fraction of the traces containing both $u$ and $v$, then 
we predict that edge $\{u,v\}$ is present if 
$u \in R(v)$.


\smallskip

{\bf KL-divergence.}
For distributions $p,q$ on $\R$ having
density functions $f$ and $g$, respectively,
their KL-divergence is defined
 by
 \begin{equation} \label{eq:kldiv-continuous}
 \kldiv{p}{q} = \int f(x) \log \left( \tfrac{f(x)}{g(x)} \right) \, dx.
 \end{equation}
One interpretation of the KL-divergence is that it is 
the expected difference between $\log(f(x))$ and $\log(g(x))$
when $x$ is randomly sampled using distribution $p$.
If one thinks of $p$ and $q$ as two forecasts
of the distribution of $x$, and one samples $x$ using
$p$ and applies the \emph{logarithmic scoring rule},
which outputs a score equal to the log-density of the
forecast distribution at the sampled point, then
$\kldiv{p}{q}$ is the difference in the expected
scores of the correct and the incorrect forecast.
A useful lower bound on this difference is supplied by
Pinsker's Inequality:
 \begin{equation} \label{eq:pinsker}
 \kldiv{p}{q} \geq 2 \, \|p-q\|^2_{\TV},
 \end{equation}
 where $\| \cdot \|_{\TV}$ denotes the total
 variation distance.
In particular, the fact that $\kldiv{p}{q} > 0$ when
$p \neq q$ 
means that the true distribution, $p$, is the unique
distribution that attains the maximum expected score,
a property that is summarized by stating that the 
logarithmic scoring rule is \emph{strictly proper}.

\smallskip

{\bf Quasi-timestamps and conditional distributions}
From now on in this section, we assume $\lambda=1$. This
assumption is without loss of generality, since the algorithm's
behavior in unchanged if we modify its input by setting 
$\lambda=1$ and multiplying the timestamps in all traces
by $\lambda$; after modifying the input in this way, the
input distribution is the same as if the traces had originally
been sampled
using the infection process with parameter $\lambda=1$.

Our analysis of Algorithm~\ref{alg:bdd-degree} hinges on
understanding the conditional distribution of the infection
time $t(u)$, given the infection times of its neighbors. 
Directly analyzing this conditional distribution is surprisingly
tricky, however. The reason is that $u$ itself may infect some
of its neighbors, so conditioning on the event that a neighbor
of $u$ was infected at time $t_0$ influences the probability
density of $t(u)$ in a straightforward way at times $t>t_0$ but
in a much less straightforward way at times $t<t_0$. We can
avoid this ``backward conditioning'' by applying the following artifice.

Recall the description of the infection
process in terms of Dijkstra's algorithm
in Section~\ref{sec:model}: edges sample
i.i.d.\ edge lengths and the timestamps $t(v)$
are equal to the distance labels assigned by
Dijkstra's algorithm when computing single-source
shortest paths from source $s$.
Now consider the sample space defined by the tuple 
of independent random edge lengths $\edgl(v,w)$.
For any vertices
$u \neq v$, define a random variable $\tdel(v)$ to be
the distance label assigned to $v$ when we \emph{delete}
$u$ and its incident edges from $G$ to obtain a 
subgraph $G-u$, and then we run Dijkstra's
algorithm on this subgraph. One can think of
$\tdel(v)$ as the time when $v$ would have been
infected if $u$ did not exist. We will call 
$\tdel(v)$ the \emph{quasi-timestamp of $v$}
(with respect to $u$).
If $N(u)=\{v_1,\ldots,v_k\}$ is the set of neighbors
of $u$, and if we sample a trace originating at a source
$s \neq u$, then the executions of 
Dijkstra's algorithm in $G$ and 
$G-u$ will coincide until the step in
which $u$ is discovered and is assigned the 
distance label
$t(u) = \min_j \{\tdel(v_j) + \edgl(v_j,u)\}.$
From this equation, it is easy to deduce a formula
for the conditional distribution of $t(u)$ given
the $k$-tuple of quasi-timestamps 
$\vtdel = (\tdel(v_j))_{j=1}^k$.
Using the standard notation $z^+$ to denote $\max\{z,0\}$
for any real number $z$, we have
\begin{equation} \label{eq:cond-cdf}
\Pr(t(u) > t \mid \vtdel) = 
\exp \left( - 
\sum_{j=1}^k (t - \tdel(v_j))^+ \right).
\end{equation}
The conditional probability density is easy
to calculate by differentiating the right side
of~\eqref{eq:cond-cdf} with respect to $t$.
For any vertex set $S$ not containing $u$, let
$\sinf{S}{t}$ denote the set of vertices
$v \in S$ such that $\tdel(v) < t$, and let
$\ninf(t,S) = \left| \sinf{S}{t} \right|$.
Then the conditional probability density function
of $t(u)$ satisfies
\begin{align} \label{eq:cond-pdf}
f(t) &= 
\ninf(t,N(u)) \exp \left( -
\sum_{j=1}^k (t - \tdel(v_j))^+ \right) \\
\log f(t) &= 
\log(\ninf(t,N(u))) -
\sum_{v \in N(u)} (t-\tdel(v))^+.
\label{eq:cond-logprob}
\end{align}
It is worth pausing here to note an important
and subtle point. The
information contained in a trace $T$ is 
insufficient to 
determine the vector of quasi-timestamps
$\vtdel$, since quasi-timestamps are 
defined by running
the infection process in the graph $G-u$,
whereas the trace represents the outcome
of running the same process in $G$.
Consequently, our algorithm does not have
sufficient information to evaluate 
$\log f(t)$ at arbitrary values of $t$.
Luckily, the equation 
$$
(t(u)-t(v))^+ = (t(u) - \tdel(v))^+
$$
holds for all $v \neq u$, since $\tdel(v)$
differs from $t(v)$ only when both quantities
are greater than $t(u)$. Thus, our algorithm
has sufficient information to evaluate 
$\log f(t(u))$, and in fact the value
$\lscore_i(S,u)$ defined in Algorithm~\ref{alg:bdd-degree},
coincides with the formula for
$\log f(t(u))$
on the right side of~\eqref{eq:cond-logprob},
when $S = N(u)$ and $\lambda=1$.

\smallskip

{\bf Analysis of the reconstruction algorithm.} The foregoing discussion prompts the following
definitions. Fix a vector of quasi-timestamps 
$\vtdel = (\tdel(v))_{v \neq u}$, and for 
any set of vertices $S$ not containing
$u$, let $p^S$ be the probability distribution
on $\R$ with density function 
\begin{equation} \label{eq:fs}
f^S(t) = \ninf(t,S) 
\exp \left( - \sum_{v \in S} (t - \tdel(v))^+ \right).
\end{equation}
One can think of $p^S$ as the distribution 
of the infection time $t(u)$ that would be predicted by a 
forecaster who knows the values $\tdel(v)$ 
for $v \in S$ and who believes that $S$ is the
set of neighbors of $u$.
Letting $N = N(u)$, each timestamp 
$t_i(u)$
is a random sample from the distribution $p^N$, 
and $\lscore_i(S,u)$ is the result
of applying the logarithmic scoring rule to
the distribution $p^S$ and the random
sample $t(u)$. Therefore
\begin{align} 
\expect[\lscore_i(N,u) - \lscore_i(S,u)] = 
\kldiv{p^N}{p^S} \geq 
2 \| p^N - p^S \|^2_{\TV}.
\label{eq:tv} 
\end{align}
The key to analyzing Algorithm~\ref{alg:bdd-degree}
lies in proving a lower bound on the expected 
total variation distance between $p^N$ and $p^S$.
The following lemma supplies the lower bound.
\begin{lemma} \label{lem:tv-lb}
Fix a vertex $u$, let $N=N(u)$ be its neighbor set,
and fix some $S \subseteq V \setminus \{u\}$ distinct
from $N$.
Letting $\pi(S \oplus N,u)$ denote
the probability that at least one element of 
the set $S \oplus N$ is infected
before $u$, we have
\begin{equation} \label{eq:tv-lb}
\expect \left( \|p^N - p^S \|_{\TV} \right) \geq 
\tfrac{1}{10} \Delta^{-2} \pi(S \oplus N, u).
\end{equation}
\end{lemma}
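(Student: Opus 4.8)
The plan is to reduce the total-variation distance to a statement about the survival functions of $p^N$ and $p^S$, prove a lower bound for each fixed realization of the quasi-timestamps, and then average over $\vtdel$ using a bound on how quickly the infection process in $G-u$ can reach the vertices of $S\oplus N$. Throughout I may assume $|S|\le\Delta$, since this is the only case the algorithm examines, so that $\ninf(t,S)\le|S|\le\Delta$ at all times.

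First fix $\vtdel$. Let $\tau=\min_{v\in S\oplus N}\tdel(v)$ be the earliest quasi-timestamp among the symmetric-difference vertices, attained at some $w^*$, and let $\theta$ be the gap between $\tau$ and the second-smallest quasi-timestamp in $S\oplus N$. By \eqref{eq:fs} the survival functions $\bar{F}^N(t)=\exp(-\sum_{v\in N}(t-\tdel(v))^+)$ and $\bar{F}^S(t)$ agree for all $t\le\tau$, since every $v\in S\oplus N$ contributes $0$ to both exponents there; write $P:=\bar{F}^N(\tau)=\bar{F}^S(\tau)$. Since each half-line $(t,\infty)$ is an admissible test set, $\|p^N-p^S\|_{\TV}\ge\sup_t|\bar{F}^N(t)-\bar{F}^S(t)|$, and I will use this at the single time $\tau+w$ with $w:=\min(\theta,\Delta^{-1})$. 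On $[\tau,\tau+w]$ no second vertex of $S\oplus N$ becomes active, so the cumulative hazards $H^N(t)=\int_\tau^t\ninf(s,N)\,ds$ and $H^S(t)=\int_\tau^t\ninf(s,S)\,ds$ differ by exactly $t-\tau$ (one extra active infector on one side, common vertices cancelling), while each is at most $\Delta w\le1$. Hence $|\bar{F}^N(\tau+w)-\bar{F}^S(\tau+w)|=P\,e^{-\min(H^N,H^S)}(1-e^{-w})\ge P\,e^{-1}(1-e^{-w})\ge\tfrac{P}{2e}\,w$, giving the per-realization bound
\[
\|p^N-p^S\|_{\TV}\ \ge\ \frac{P}{2e}\,\min(\theta,\Delta^{-1}).
\]

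The key point --- and the reason the lemma is stated in expectation --- is that this bound degrades to $0$ when $\theta$ is tiny, i.e.\ when a vertex of $N\setminus S$ and one of $S\setminus N$ have nearly coincident quasi-timestamps and their hazard contributions cancel; so no pointwise bound of the required order holds, and one must average and show such cancellations are rare. Conditioning on the history $\mathcal{F}_\tau$ of the process in $G-u$ up to $\tau$, the value $P$ is $\mathcal{F}_\tau$-measurable, and the instantaneous rate at which any further vertex of $S\oplus N$ is infected is at most $|S\oplus N|\cdot\Delta\le2\Delta^2$, since each of the at most $2\Delta$ uninfected symmetric-difference vertices has at most $\Delta$ neighbors, each firing at rate $1$. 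By the comparison $\Pr(\theta\ge x\mid\mathcal{F}_\tau)=\expect[\exp(-\int_\tau^{\tau+x}\!(\text{rate})\,ds)\mid\mathcal{F}_\tau]\ge e^{-2\Delta^2 x}$, we get $\Pr(\theta\ge\tfrac{1}{2\Delta^2}\mid\mathcal{F}_\tau)\ge e^{-1}$. Using $\min(\theta,\Delta^{-1})\ge\tfrac{1}{2\Delta^2}$ on that event and taking expectations over $\vtdel$,
\[
\expect\|p^N-p^S\|_{\TV}\ \ge\ \frac{1}{4e\Delta^2}\,\expect\!\left[P\,\Pr\!\big(\theta\ge\tfrac{1}{2\Delta^2}\mid\mathcal{F}_\tau\big)\right]\ \ge\ \frac{1}{4e^2\Delta^2}\,\expect[P].
\]
Finally $\expect[P]=\expect_{\vtdel}[\bar{F}^N(\tau)]=\Pr(t(u)>\tau)=\pi(S\oplus N,u)$, because $v$ precedes $u$ iff $\tdel(v)<t(u)$, so some element of $S\oplus N$ precedes $u$ iff $t(u)>\tau$. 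This proves the bound up to the constant; the stated $\tfrac{1}{10}$ follows by optimizing the window width and the threshold in place of the crude $\tfrac{1}{4e^2}$.

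I expect the main obstacle to be precisely this averaging step. The cancellation phenomenon rules out a pointwise estimate, so the argument must quantify how unlikely it is for opposite-sign vertices of $S\oplus N$ to have near-simultaneous quasi-timestamps. The clean resolution is the conditional intensity bound $2\Delta^2$ on the next $S\oplus N$ infection combined with the $\mathcal{F}_\tau$-measurability of $P$; the technical heart is verifying that the process in $G-u$ genuinely has this bounded intensity and that memorylessness permits restarting the edge clocks at time $\tau$.
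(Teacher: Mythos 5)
Your proposal is correct and follows essentially the same argument as the paper's proof: a per-realization lower bound on $\|p^N-p^S\|_{\TV}$ extracted just after the earliest quasi-timestamp in $S \oplus N$ (the paper uses the interval probability $|p^N(I)-p^S(I)|$ with $I=(t_0,t_0+b)$ kept free of \emph{all} $S\cup N$ quasi-timestamps so the hazards are constant, where you evaluate the survival functions at a single point and let the common $S\cap N$ hazard cancel in the difference), followed by the identical averaging step via $\expect[\bar F^N(\tau)]=\pi(S\oplus N,u)$ and the same $2\Delta^2$ bound on the arrival rate of the next relevant quasi-timestamp. Your closing remark about the constant also checks out: replacing your threshold bound by $\expect[\min(\theta,\Delta^{-1})\mid \mathcal{F}_\tau]\ge\int_0^{1/\Delta}e^{-2\Delta^2 t}\,dt\ge\tfrac12(1-e^{-2})\Delta^{-2}$ and keeping the sharper per-realization factor $e^{-1}(1-e^{-1})$ yields $\tfrac12(e^{-1}-e^{-2})(1-e^{-2})\Delta^{-2}\,\pi(S\oplus N,u)\ge\tfrac1{10}\Delta^{-2}\,\pi(S\oplus N,u)$, which is precisely the paper's computation.
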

\begin{proof}
For a fixed vector of quasi-timestamps $(\tdel(v))_{v \neq u}$
we can bound $\|p^N - p^S\|_{\TV}$ from below by the 
following method. Let $v_0$ denote the vertex in 
$S \oplus N$ whose quasi-timestamp $t_0$ is earliest.
Let $b$ be the largest number in the range $0 \leq b \leq \frac{1}{\Delta}$
such that the open interval $I=(t_0,t_0+b)$ does not contain
the quasi-timestamps of any element of $S \cup N$.
The value $|p^N(I) - p^S(I)|$ is a lower bound on 
$\|p^N-p^S\|_{\TV}$. 

One may verify by inspection
that the density
function $f^S(t)$ defined in equation~\eqref{eq:fs}
satisfies the differential equation 
$f^S(t) = \frac{d}{dt} \left( f^S(t) / \ninf(t,S) \right)$
for almost all $t$.
By integrating both sides of the equation we find
that for all $t$,
\begin{equation} \label{eq:1-fs}
1 - F^S(t) = \frac{f^S(t)}{\ninf(t,S)} = 
\exp \left(- \sum_{v \in S} (t-\tdel(v))^+ \right),
\end{equation}
where $F^S$ denotes the cumulative distribution function
of $p^S$. A similar formula holds for $F^N$.
Let $G = 1-F^S(t_0) = 1-F^N(t_0)$, where the latter
equation holds because of our choice of $t_0$.
We have
\begin{align*}
p^S(I) & = G - (1 - F^N(t_0+b)) \\
&= G \cdot (1 - e^{-\ninf(t_0+b,S) \, b}) \\
p^N(I) &= G - (1 - F^S(t_0+b)) \\
&= G \cdot (1 - e^{-\ninf(t_0+b,N) \, b}),
\end{align*}
where the second and fourth lines follow
from the formula~\eqref{eq:1-fs}, using
the fact that none of the quasi-timestamps
$\tdel(v)$ for $v \in S \cup N$ occur in 
the interval $(t_0,t_0+b)$.

Let $\ninf = \ninf(t_0,S) = \ninf(t_0,N)$.
We have
\begin{align} \nonumber
|p^N(I) - p^S(I)| &= G \cdot \left|
e^{-\ninf(t_0+b,N) \, b} - e^{-\ninf(t_0+b,S) \, b} \right| \\
\label{eq:tv-lb-2}
& = G e^{-\ninf b} (1 - e^{-b}),
\end{align}
using the fact that exactly one of $\ninf(t_0+b,S),
\ninf(t_0+b,N)$ is equal to $\ninf$ and the other
is equal to $\ninf+1$. To bound the right side
of~\eqref{eq:tv-lb-2}, we reason as follows.
First, $\ninf = |\sinf{N}{t_0}| \leq |N| \leq \Delta$.
Second, $b \leq \Delta^{-1}$ by construction. Hence
$e^{-\ninf b} \geq e^{-1}$. Also, the inequality 
$1 - e^{-x} \geq (1 - e^{-1})x$ holds for all $x \in [0,1]$,
since the left side is a concave function, the right
side is a linear function, and the two sides agree
at $x=0$ and $x=1$. Thus, $1-e^{-b} \geq (1-e^{-1})b$,
and we have derived
\[
|p^N(I) - p^S(I)| \geq (e^{-1} - e^{-2}) G b.
\]

To complete the proof of the lemma we need to
derive a lower bound on the expectation of the
product $Gb$. First note that $G = 1 - F^N(t_0)$
is the probability $t(u) > t_0$ when $t(u)$ is 
sampled from the distribution $p^N$. Since 
$p^N$ is the conditional distribution of $t(u)$
given $\vtdel$, we can now take the expectation
of both sides of the equation $G = 1-F^N(t_0)$
and conclude that $\expect[G] = \pi(S \oplus N,u)$.
Finally, to place a lower bound on $\expect[b \mid G]$,
we reason as follows. In the infection process on
$G - u$, let $R$ denote the set of vertices in
$S \cup N$ whose quasi-timestamps are strictly
greater than $t_0$. The number of edges joining
$R$ to the rest of $V \setminus \{u\}$ is at most
$\Delta |R| < 2 \Delta$, so the waiting time from
$t_0$ until the next quasi-timestamp of an 
element of $R$ stochastically dominates
the minimum of $2 \Delta^2$ i.i.d.\ $\expv(1)$ random
variables. Thus the conditional distribution 
of $b$ given $G$ stochastically dominates
the minimum of $2 \Delta^2$ i.i.d.\ $\expv(1)$
random variables and the constant $1/\Delta$, so
\[
\expect[b|G] \geq 
\int_0^{1/\Delta} e^{-2 \Delta^2 t} \, dt 
= \tfrac12 \Delta^{-2} \left[ 1 - e^{-2 \Delta} \right] 
\geq \tfrac12 \Delta^{-2} \left[ 1 - e^{-2} \right].
\]
Putting all of these bounds together, we obtain
\[
\expect(\|p^N-p^S\|_{\TV}) \geq \tfrac12 \Delta^{-2} (e^{-1} - e^{-2})
(1 - e^{-2}) \pi(S \oplus N,u),
\]
and the inequality~\eqref{eq:tv-lb} follows
by direct calculation.
\end{proof}

Combining Pinsker's Inequality with Lemma~\ref{lem:tv-lb}
we immediately obtain the following corollary.

\begin{corollary} \label{cor:scrule}
If $N=N(u)$ and $S$ is any set such that 
$\pi(S \oplus N,u) > 1/4$, then for each 
trace $T_i$ the expected value of 
$\lscore_i(N)-\lscore_i(S)$ is
$\Omega(\Delta^{-4})$.
\end{corollary}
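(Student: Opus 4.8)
The plan is to simply chain together Equation~\eqref{eq:tv} and Lemma~\ref{lem:tv-lb}, the only nontrivial glue being a convexity argument that reconciles a mismatch between the two: Equation~\eqref{eq:tv} (which already incorporates Pinsker's Inequality) controls the expected \emph{squared} total-variation distance, whereas Lemma~\ref{lem:tv-lb} bounds the expectation of the total-variation distance itself. Bridging these is the one place where care is needed.

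First I would emphasize that Equation~\eqref{eq:tv} is a statement about a \emph{fixed} vector of quasi-timestamps $\vtdel$, since $p^N$ and $p^S$ are both defined relative to $\vtdel$. Read correctly, it asserts that, conditioned on $\vtdel$,
$$\expect[\lscore_i(N,u) - \lscore_i(S,u) \mid \vtdel] = \kldiv{p^N}{p^S} \geq 2 \, \|p^N - p^S\|_{\TV}^2 .$$
Taking the expectation over the random draw of $\vtdel$ and using the tower property together with linearity gives
$$\expect[\lscore_i(N,u) - \lscore_i(S,u)] \geq 2 \, \expect\!\left[ \|p^N - p^S\|_{\TV}^2 \right].$$
Now I would invoke Jensen's inequality for the convex map $x \mapsto x^2$ to pull the square outside the expectation, namely $\expect[\|p^N-p^S\|_{\TV}^2] \geq \left( \expect[\|p^N-p^S\|_{\TV}] \right)^2$, and then substitute the lower bound supplied by Lemma~\ref{lem:tv-lb}:
$$\expect\!\left[ \|p^N - p^S\|_{\TV}^2 \right] \geq \left( \tfrac{1}{10} \Delta^{-2} \pi(S \oplus N, u) \right)^2 = \tfrac{1}{100} \Delta^{-4} \, \pi(S \oplus N, u)^2 .$$

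Finally, I would apply the hypothesis $\pi(S \oplus N, u) > 1/4$, so that $\pi(S \oplus N, u)^2 > 1/16$, and combine the displays to conclude
$$\expect[\lscore_i(N,u) - \lscore_i(S,u)] \geq \tfrac{2}{1600}\, \Delta^{-4} = \Omega(\Delta^{-4}),$$
which is exactly the claimed bound. The conceptual crux — and essentially the only step that is not bookkeeping — is the interchange of the conditional identity~\eqref{eq:tv} with the unconditional estimate of Lemma~\ref{lem:tv-lb} through Jensen's inequality; everything else is substitution of constants. I expect no genuine obstacle beyond keeping the direction of conditioning straight, since the heavy lifting (the total-variation lower bound and Pinsker's Inequality) has already been done.
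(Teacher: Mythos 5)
Your proposal is correct and takes essentially the same route as the paper: the paper obtains the corollary by ``combining Pinsker's Inequality with Lemma~\ref{lem:tv-lb}'', which is exactly your chain of equation~\eqref{eq:tv} (conditioned on $\vtdel$), the tower property, Jensen's inequality for $x \mapsto x^2$, and the hypothesis $\pi(S \oplus N, u) > 1/4$. You have simply spelled out the conditioning and the Jensen step that the paper treats as immediate.
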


Using this corollary, we are ready to prove our main
theorem.
\begin{theorem} \label{thm:bdd-degree}
For any constant $c>0$, the
probability that Algorithm~\ref{alg:bdd-degree}
fails to perfectly reconstruct $G$, when given
$$\ell=\Omega(\Delta^9 \log^2 \Delta \log n)$$
complete traces, is at most $1/n^c$.
\end{theorem}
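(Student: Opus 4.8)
The plan is to reduce the theorem to two independent high-probability guarantees and then combine them by a union bound. The first guarantee concerns the scoring phase (the first double loop over $u$ and $S$): for every vertex $u$, the set $R(u)$ returned by the $\argmax$ differs only \emph{insignificantly} from the true neighbourhood $N(u)$, in the sense that $\pi(R(u)\oplus N(u),u)\le \tfrac14$. The second guarantee concerns the edge-insertion phase (the final loop over ordered pairs): once every $R(u)$ is insignificantly close to $N(u)$, the threshold test inserts exactly the edges of $G$. I would establish the two phases separately; essentially all of the work lies in the scoring phase.

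\textbf{Scoring phase.} Fix $u$ and write $N=N(u)$. Since $|N|\le\Delta$, the true neighbourhood is itself a feasible candidate, so $\lscore(R(u),u)\ge\lscore(N,u)$; hence it suffices to show that, with high probability, \emph{every} set $S$ with $\pi(S\oplus N,u)>\tfrac14$ has $\lscore(S,u)<\lscore(N,u)$. Fix one such ``significant'' $S$ and set $X_i=\lscore_i(N,u)-\lscore_i(S,u)$; the $X_i$ are i.i.d.\ across the $\ell$ traces, and I must bound $\Pr\big[\sum_i X_i\le 0\big]$. Corollary~\ref{cor:scrule} already gives $\expect[X_i]=\expect\,\kldiv{p^N}{p^S}=\Omega(\Delta^{-4})$, so the mean points the right way; the difficulty is that $X_i$ is \emph{unbounded} (the timestamp sums are sums of up to $\Delta$ exponential gaps, and $\lscore_i(S,u)$ can even equal $-\infty$), so no off-the-shelf range-based inequality applies.

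The device that tames the tail is that $X_i$ is a log-likelihood ratio. Conditioning on the quasi-timestamp vector $\vtdel$ of trace $i$, the text has already shown $t_i(u)\sim p^N$ and $\lscore_i(S,u)=\log f^S(t_i(u))$, so $Y_i:=-X_i=\log\big(f^S(t_i(u))/f^N(t_i(u))\big)$. For any $\theta\in(0,1)$, Hölder's inequality gives $\expect\big[e^{\theta Y_i}\mid\vtdel\big]=\int (f^N)^{1-\theta}(f^S)^{\theta}\le 1$, and at $\theta=\tfrac12$ the deficit is the squared Hellinger distance: $\expect[e^{Y_i/2}\mid\vtdel]=\int\sqrt{f^N f^S}\le 1-\tfrac12\,\|p^N-p^S\|_{\TV}^2$. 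Taking expectations over $\vtdel$ and invoking Lemma~\ref{lem:tv-lb} with $\pi(S\oplus N,u)>\tfrac14$ (together with $\expect[\,\|\cdot\|_{\TV}^2\,]\ge(\expect\|\cdot\|_{\TV})^2$) yields $\expect[e^{Y_i/2}]\le 1-\Theta(\Delta^{-4})$. A Chernoff bound with the fixed parameter $\theta=\tfrac12$ then gives
\[
\Pr\Big[\textstyle\sum_i X_i\le 0\Big]=\Pr\Big[\textstyle\sum_i Y_i\ge 0\Big]\le\big(\expect[e^{Y_1/2}]\big)^{\ell}\le\exp\!\big(-\Theta(\Delta^{-4})\,\ell\big),
\]
where the $-\infty$ scores only help, since they contribute a factor $e^{Y_i/2}=0$. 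Crucially the exponent is linear in $\Delta^{-4}\ell$. There are at most $n^{\Delta+1}$ candidate sets $S$ and $n$ vertices $u$, so a union bound makes the scoring phase succeed with probability $\ge 1-n^{-c}$ once $\Delta^{-4}\ell=\Omega(\Delta\log n)$, which is comfortably implied by $\ell=\Omega(\Delta^{9}\log^2\Delta\log n)$ (in fact this clean Hellinger argument needs only $\ell=\Omega(\Delta^5\log n)$ here).

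\textbf{Edge-insertion phase and conclusion.} Assume every $R(u)$ is insignificantly close to $N(u)$. The threshold $\ell/3$ separates the probabilities $\tfrac14$ and $\tfrac12$. For a true edge $\{u,v\}$ exactly one of $u,v$ precedes the other in each complete trace, so one direction -- say $u$ before $v$ -- has probability $\ge\tfrac12$ and, by a Chernoff bound, occurs in $\ge\ell/3$ traces w.h.p.; moreover $u\in R(v)$, for otherwise $u\in N(v)\setminus R(v)\subseteq R(v)\oplus N(v)$ would force $\Pr[u\text{ before }v]\le\pi(R(v)\oplus N(v),v)\le\tfrac14$, a contradiction. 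Hence the ordered pair $(v,u)$ inserts $\{u,v\}$. For a non-edge $\{u,v\}$, consider $(u,v)$: if $v\notin R(u)$ the test fails outright, while if $v\in R(u)$ then $v\in R(u)\setminus N(u)\subseteq R(u)\oplus N(u)$, so $\Pr[v\text{ before }u]\le\tfrac14$ and a Chernoff bound shows $v$ precedes $u$ in fewer than $\ell/3$ traces w.h.p.; symmetrically for $(v,u)$. Thus non-edges are never inserted. A union bound over the $O(n^2)$ ordered pairs controls these Chernoff events with $\ell=\Omega(\log n)$ traces, and a final union bound across the two phases gives failure probability $\le 1/n^c$. \textbf{The main obstacle} is the concentration in the scoring phase: because the empirical scores are unbounded and can be $-\infty$, a naive variance-based (Bernstein) or range-based (Hoeffding) bound pays extra powers of $\Delta$ through the variance proxy of the timestamp sums and does not comfortably survive the union bound over the $n^{\Theta(\Delta)}$ candidate sets. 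The log-likelihood-ratio identity $\expect[e^{Y_i/2}\mid\vtdel]\le 1-\tfrac12\|p^N-p^S\|_{\TV}^2$ is precisely what converts the $\Omega(\Delta^{-4})$ total-variation bound of Lemma~\ref{lem:tv-lb} into an $\exp(-\Theta(\Delta^{-4})\ell)$ tail, which is more than strong enough for the claimed trace count.
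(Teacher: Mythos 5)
Your proof is correct, and it follows the paper's overall architecture: the same two-phase decomposition (first show that every $R(u)$ satisfies $\pi(R(u)\oplus N(u),u)\le 1/4$, then argue that the threshold test inserts precisely the edges of $G$), the same reliance on Lemma~\ref{lem:tv-lb} and Corollary~\ref{cor:scrule}, the same union bounds, and an edge-insertion analysis that is essentially the paper's own (the paper packages your per-pair Chernoff events as an ``empirical frequency property'' with tolerance $1/12$). Where you genuinely depart from the paper is the concentration step in the scoring phase, and your version is both more self-contained and quantitatively stronger. The paper treats the differences $\lscore_i(N,u)-\lscore_i(S,u)$ as i.i.d.\ samples with mean $\Omega(\Delta^{-4})$ that are bounded above by $O(\log\Delta)$ with probability $1-1/\poly(\Delta)$ and have an exponential tail, and then appeals to generic ``exponential concentration inequalities'' to conclude that $\Omega(\Delta^{8}\log^2\Delta\,\log(1/\delta))$ samples suffice; after the union bound over the $n^{O(\Delta)}$ candidate sets, this is exactly what produces the stated $\Omega(\Delta^9\log^2\Delta\log n)$. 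Your observation that $Y_i=-X_i$ is a log-likelihood ratio, so that H\"{o}lder gives $\expect[e^{Y_i/2}\mid\vtdel]=\int\sqrt{f^N f^S}\le 1-\tfrac12\|p^N-p^S\|^2_{\TV}$, replaces that appeal with a one-line Chernoff bound at the fixed parameter $\theta=1/2$: it requires no boundedness or tail analysis at all (the $-\infty$ scores indeed only help, as you note), and it yields failure probability $\exp\left(-\Theta(\Delta^{-4})\,\ell\right)$ per candidate set, hence trace complexity $\Omega(\Delta^5\log n)$ for this phase --- an improvement by roughly a factor of $\Delta^4\log^2\Delta$ over what the paper's argument delivers. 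The only thing the paper's route buys is generality (it would apply to scoring rules whose scores are not log-densities); for this problem your Hellinger/Bhattacharyya argument is the tighter and cleaner one, and it proves the theorem as stated with room to spare.
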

\begin{proof}
Let us say that a set $S$ \emph{differs significantly}
from $N(u)$ if $\pi(S \oplus N(u),u) > 1/4$.
When $\ell$ is as specified in
the theorem statement, with probability
at least $1 - 1/n^{c+1}$, there is no
vertex $u$ such that the algorithm's
estimate of $u$'s neighbor set, $R(u)$, differs
significantly from $N(u)$.
Indeed, when $S,N$ satisfy the hypotheses
of Corollary~\ref{cor:scrule}, the random 
variables $\lscore_i(N) - \lscore_i(S)$
are i.i.d.\ samples from a distribution
that has expectation $\Omega(\Delta^{-4})$,
is bounded above by $O(\log \Delta)$ with
probability $1 - 1/\poly(\Delta)$, and 
has an exponential tail. Exponential
concentration inequalities for such distributions
imply that for all $\delta>0$, the average of 
$\ell = \Omega(\Delta^{8} \log^2(\Delta) \log(1/\delta))$
i.i.d.\ samples will be non-negative with
probability at least $1-\delta$. Setting 
$\delta = n^{-\Delta-c-2}$ and taking the union
bound over all vertex sets $S$ of cardinality 
$\Delta$ or smaller, we conclude that when
$\ell = \Omega(\Delta^9 \log^2(\Delta) \log n)$,
the algorithm has less than $n^{-c-2}$ probability
of selecting a set $R(u)$ that differs significantly
from $N(u)$. Taking the union bound over all vertices
$u$ we obtain a proof of the claim stated earlier in
this paragraph: with probability $1-1/n^{c+1}$, there
is no $u$ such that $R(u)$ differs significantly
from $N(u)$.

Let us say that an ordered pair of vertices $(u,v)$
violates the \emph{empirical frequency property} if
the empirical
frequency of the event $t_i(v) < t_i(u)$ among the traces
$T_1,\ldots,T_\ell$ differs by more than $\frac{1}{12}$ 
from the probability that $t(v) < t(u)$ in a random trace.
The probability of any given pair $(u,v)$ violating this property
is exponentially small in $\ell$, hence we can assume 
it is
less than $1/n^{c+3}$ by taking the constant inside the $\Omega(\cdot)$
to be sufficiently large. Summing over pairs $(u,v)$, the 
probability that there exists a pair violating the empirical 
frequency property is less than $1/n^{c+1}$ and we henceforth
assume that no such pair exists.

Assuming that no set $R(u)$ differs significantly from $N(u)$
and that no pair $(u,v)$ violates the empirical frequency property,
we now prove that the algorithm's output, $\hat{G}$, is equal to $G$.
If $\{u,v\}$ is an edge of $G$, assume without loss of generality 
that the event $t(v) < t(u)$ has probability at least 1/2. By the
empirical frequency property, at least $\ell/3$ traces satisfy
$t_i(v) < t_i(u)$. Furthermore, $v$ must belong to $R(u)$, since
if it belonged to $R(u) \oplus N(u)$ it would imply that 
$\pi(R(u) \oplus N(u),u) \geq \Pr(t(v) < t(u)) \geq 1/2$,
violating our assumption that $R(u)$ doesn't differ 
significantly from $N(u)$. Therefore $v \in R(u)$ and the
algorithm adds $\{u,v\}$ to $\hat{G}$. Now suppose 
$\{u,v\}$ is an edge of $\hat{G}$, and assume without 
loss of generality that this edge was inserted when 
processing the ordered pair $(u,v)$. Thus, at least $\ell/3$ traces 
satisfy $t_i(v) < t_i(u)$, and $v \in R(u)$. By the 
empirical frequency property, we know that a random trace
satisfies $t(v) < t(u)$ with probability at least $1/4$.
As before, if $v$ belonged to $R(u) \oplus N(u)$ this would
violate our assumption that $R(u)$ does not differ significantly
from $N(u)$. Hence $v \in N(u)$, which means that $\{u,v\}$ is
an edge of $G$ as well.
\end{proof}

\section{Experimental Analysis}
\label{sec:exp}

In the preceding sections we have established trace complexity results for various network inference tasks. In this section, our goal is to assess our predictions on real and synthetic social and information networks whose type, number of nodes, and maximum degree ($\Delta$) we now describe.

\begin{figure*}[ht!] 
    \begin{center} 
    \begin{tabular}{cccc}
    \subfigure[Barabasi-Albert Graph]{\includegraphics[width=5cm]{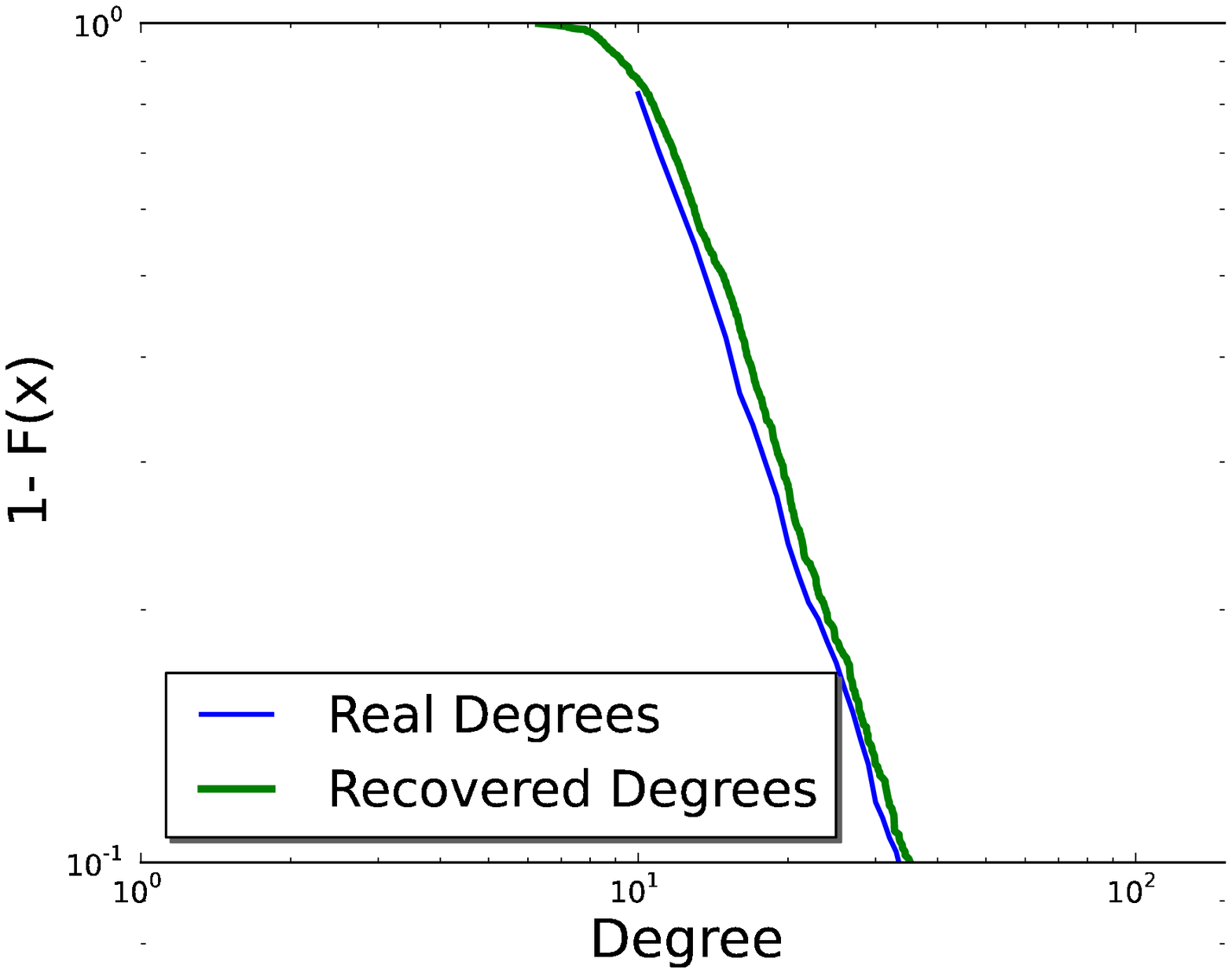}\label{fig:label1}}&
    \subfigure[Facebook-Rice-Graduate]{\includegraphics[width=5cm]{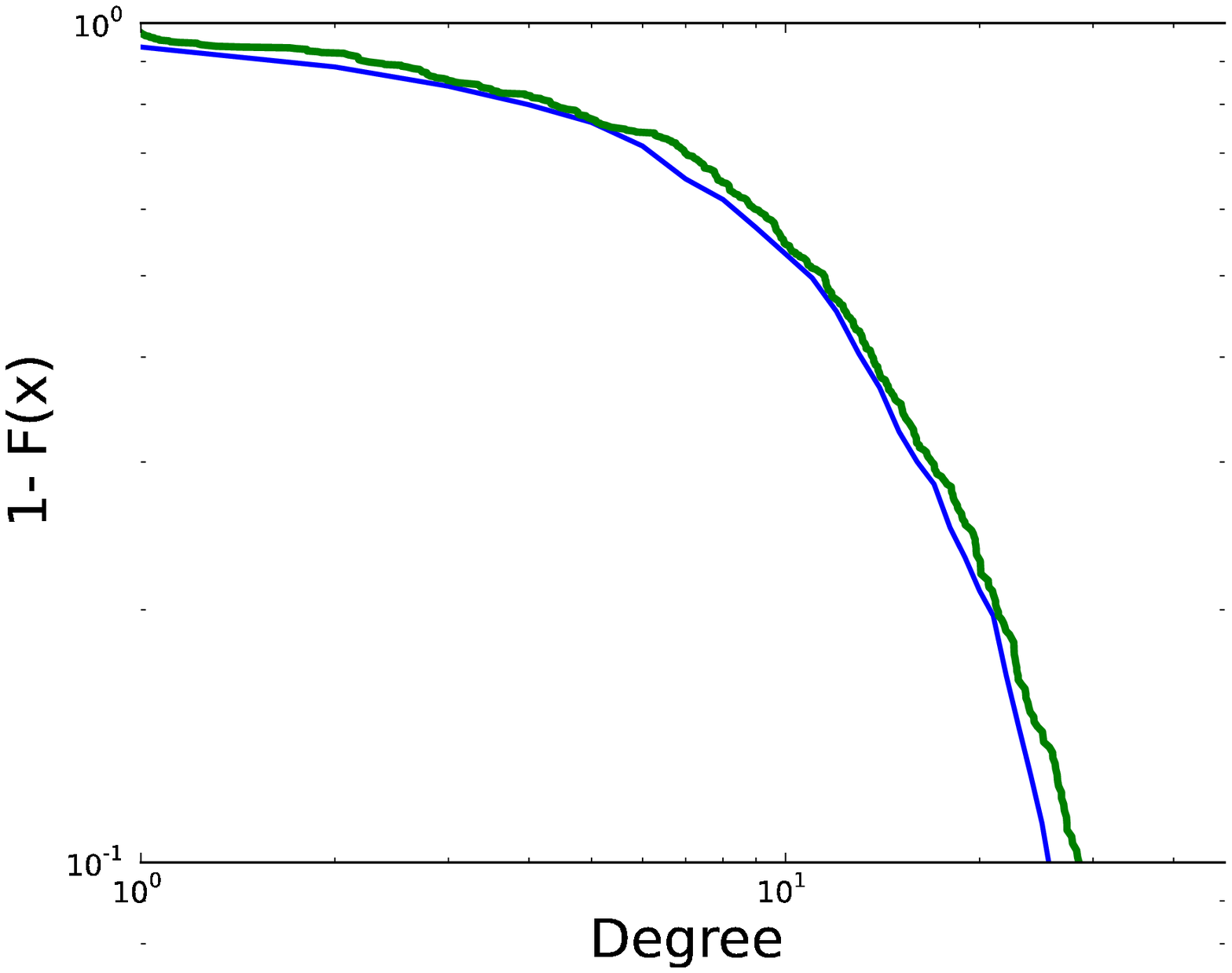}\label{fig:label2}}&
    \subfigure[Facebook-Rice Undergraduate]{\includegraphics[width=5cm]{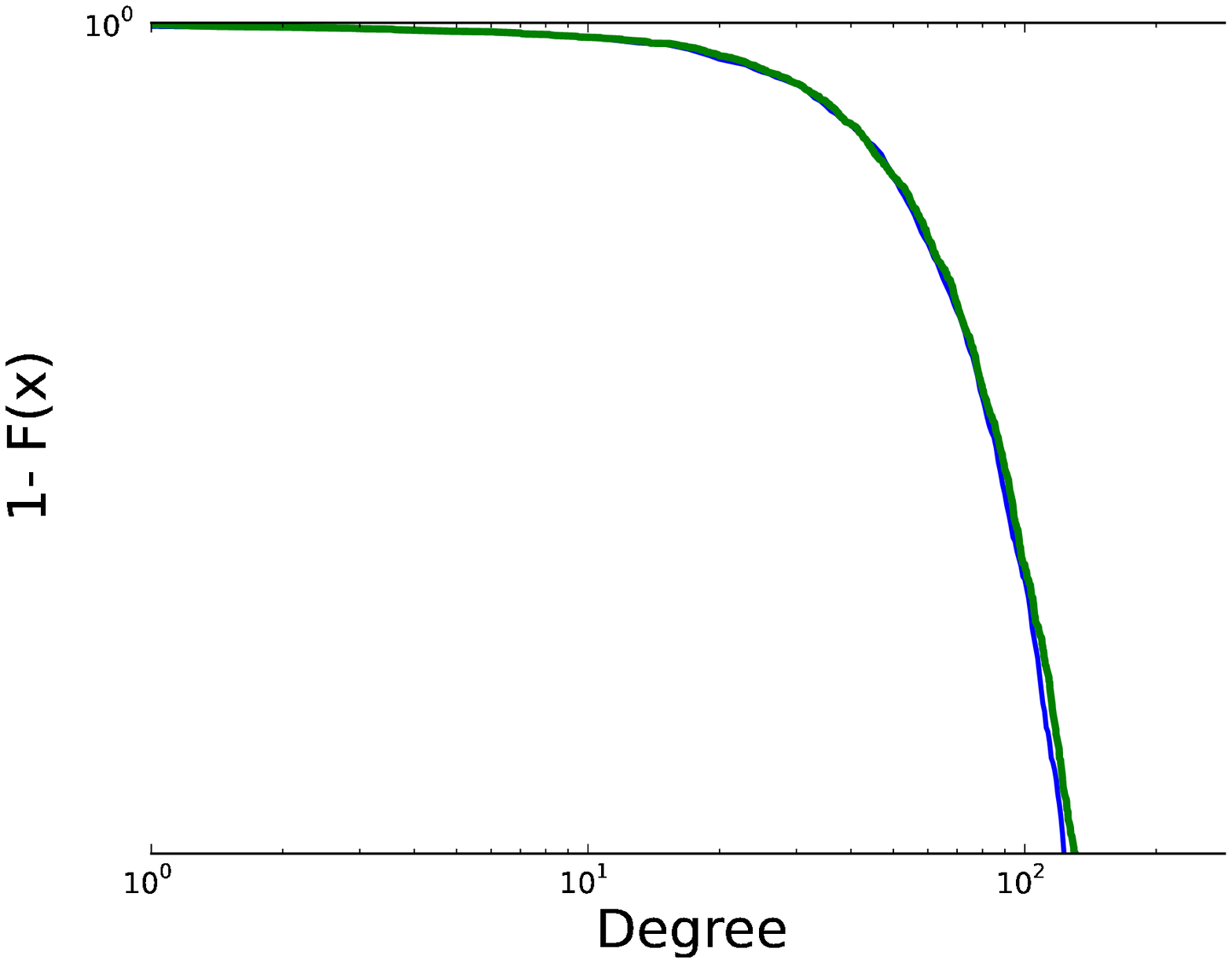}\label{fig:label3}}
    \end{tabular} 
    \end{center} 
    \vspace{-0.5cm}
    \caption{Complementary cumulative density function (CCDF) of degree reconstruction using $\Omega(n)$ traces for (a) a synthetic network with 1,024 nodes
generated using the Barabasi-Albert algorithm, and two real social networks: two subsets of the Facebook network comprising 503 graduate students (a) and 1220 undergraduate students (c), respectively, from Rice University.} 
    \label{fig:reconstructing}
\end{figure*}

We use two real social networks, namely two Facebook subnetworks comprising 503 ($\Delta=48$) graduate and 1220 ($\Delta=287$)  undergraduate students, respectively~\cite{Mislove10You}. We also generate three synthetic networks, each possessing 1024 vertices, whose generative models frequently arise in practice in the analysis of networks.  We generated a \emph{Barabasi-Albert Network}~\cite{Barabasi99Emergence} ($\Delta=174$), which is a preferential attachment model,
a \emph{$G_{(n,p)}$ Network}~\cite{Erdos60OnTheEvolution} ($\Delta=253$) with $p=0.2$, and a \emph{Power-Law Tree}, whose node degree distribution follows a power-law distribution with exponent $3$  ($\Delta=94$).


First, we evaluate the performance of the algorithm to reconstruct the degree distribution of networks without inferring the network itself (Section~\ref{sec:degree}). Figure~\ref{fig:reconstructing} shows the reconstruction of the degree distribution using $\Omega(n)$ traces of the Barabasi-Albert Network and the two Facebook subnetworks. We used $10n$ traces, and the plots show that the CCDF curves for the real degrees and for the reconstructed distribution have almost perfect overlap.

Turning our attention back to network inference, the $\Omega(n \Delta^{1-\epsilon})$ lower-bound established in Section~\ref{sec:model} tells us that the First-Edge algorithm is nearly optimal for perfect network inference in the general case. Thus, we  assess the performance of our algorithms against this limit. The performance of First-Edge is notoriously predictable: if we use $\ell$ traces where $\ell$ is less than the total number of edges in the network, then it returns nearly $\ell$ edges which are all true positives, and it never returns false positives. 

If we allow false positives, we can use heuristics to improve the First-Edge's recall. To this end, we propose the following heuristic that uses the degree distribution reconstruction algorithm (Section~\ref{sec:degree}) in a pre-processing phase, and places an edge in the inferred network provided the edge has probability at least $p$ of being in the graph. We call this heuristic \emph{First-Edge$+$}.

\begin{figure}[ht] 
    \begin{center} 
    \begin{tabular}{cc}
    \subfigure[Barabasi-Albert]{\includegraphics[width=5cm]{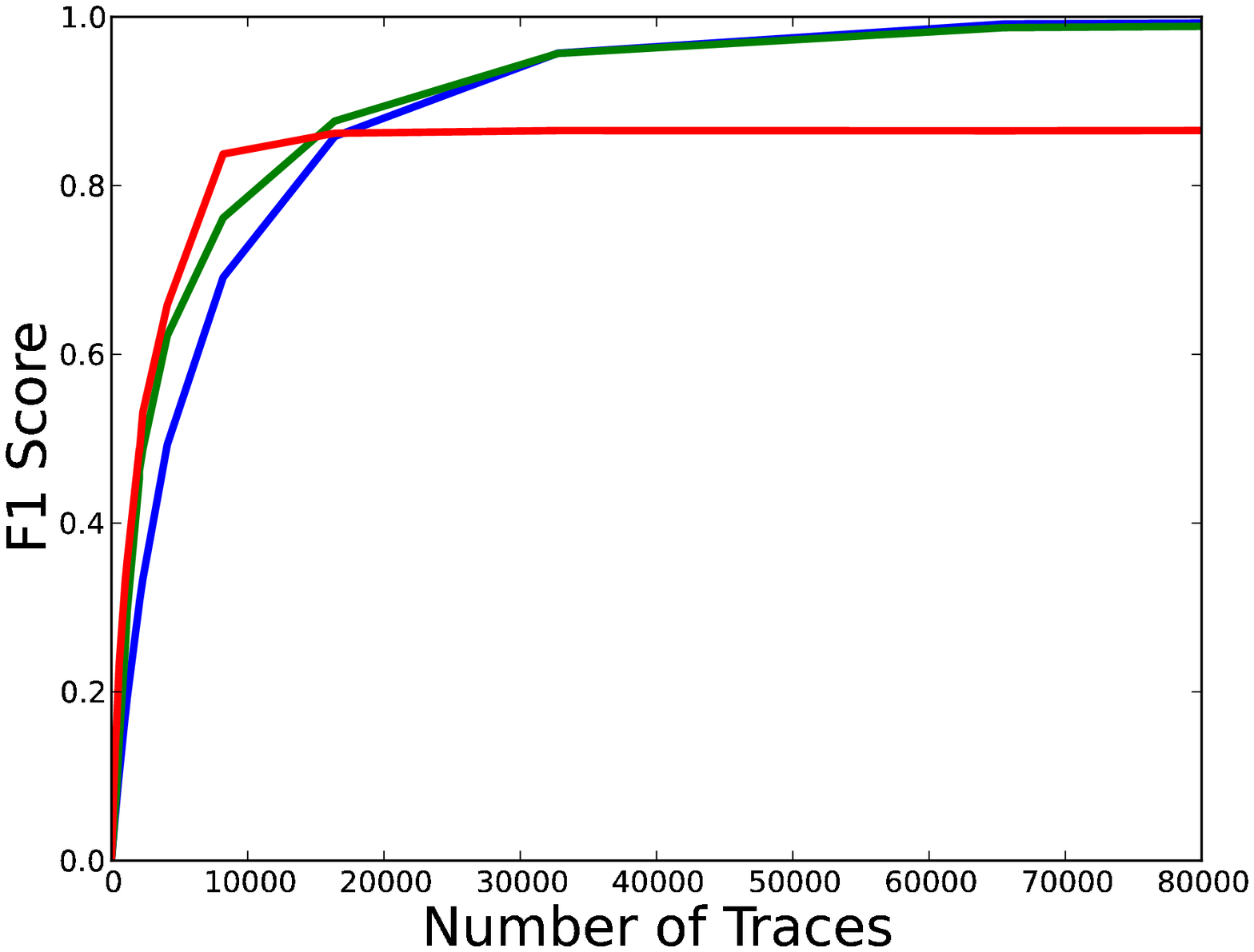}\label{fig:r1}}&
    \subfigure[Facebook-Rice Undergrad]{\includegraphics[width=5cm]{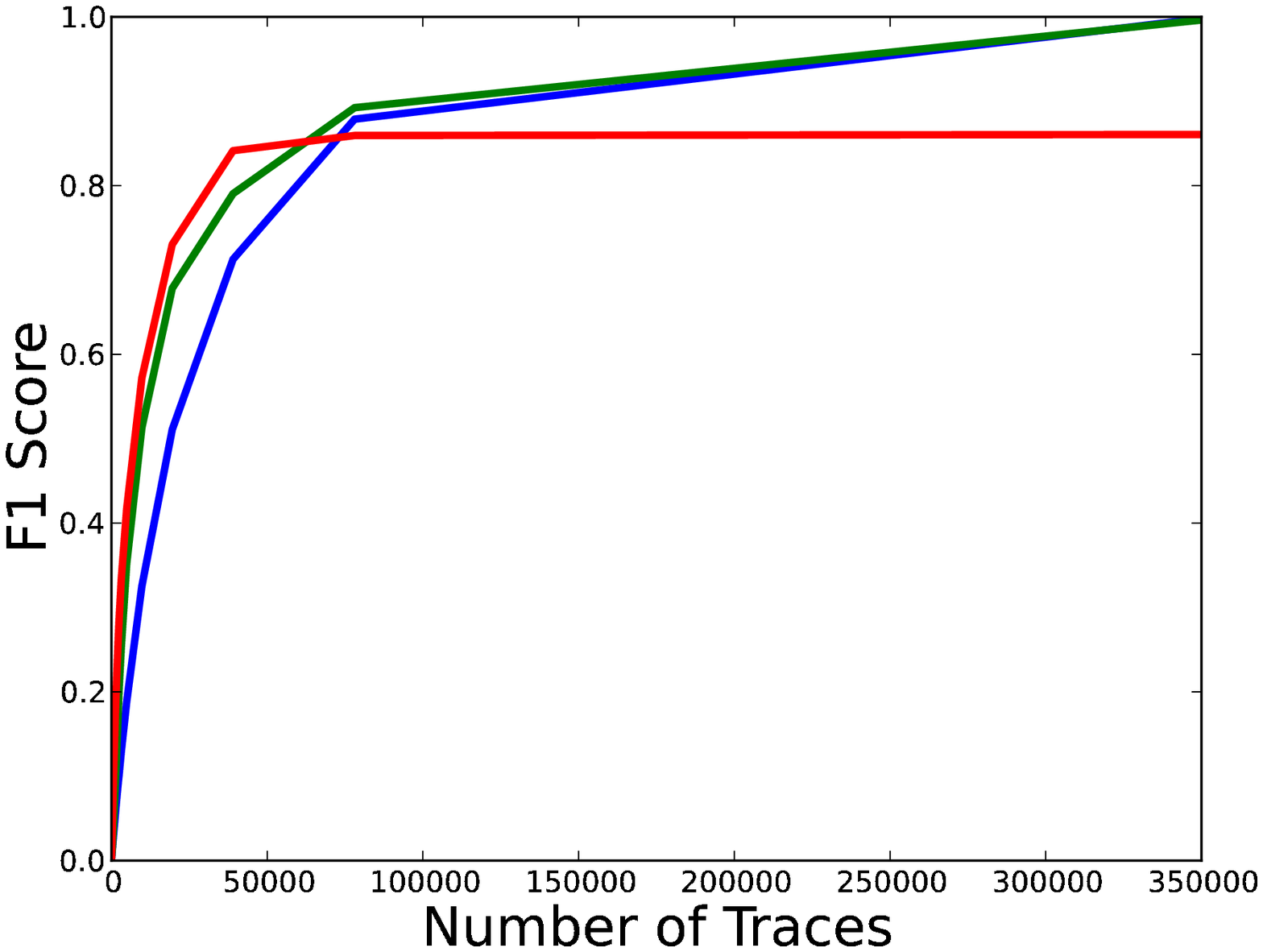}\label{fig:r2}}\\
    \subfigure[Power-Law Tree]{\includegraphics[width=5cm]{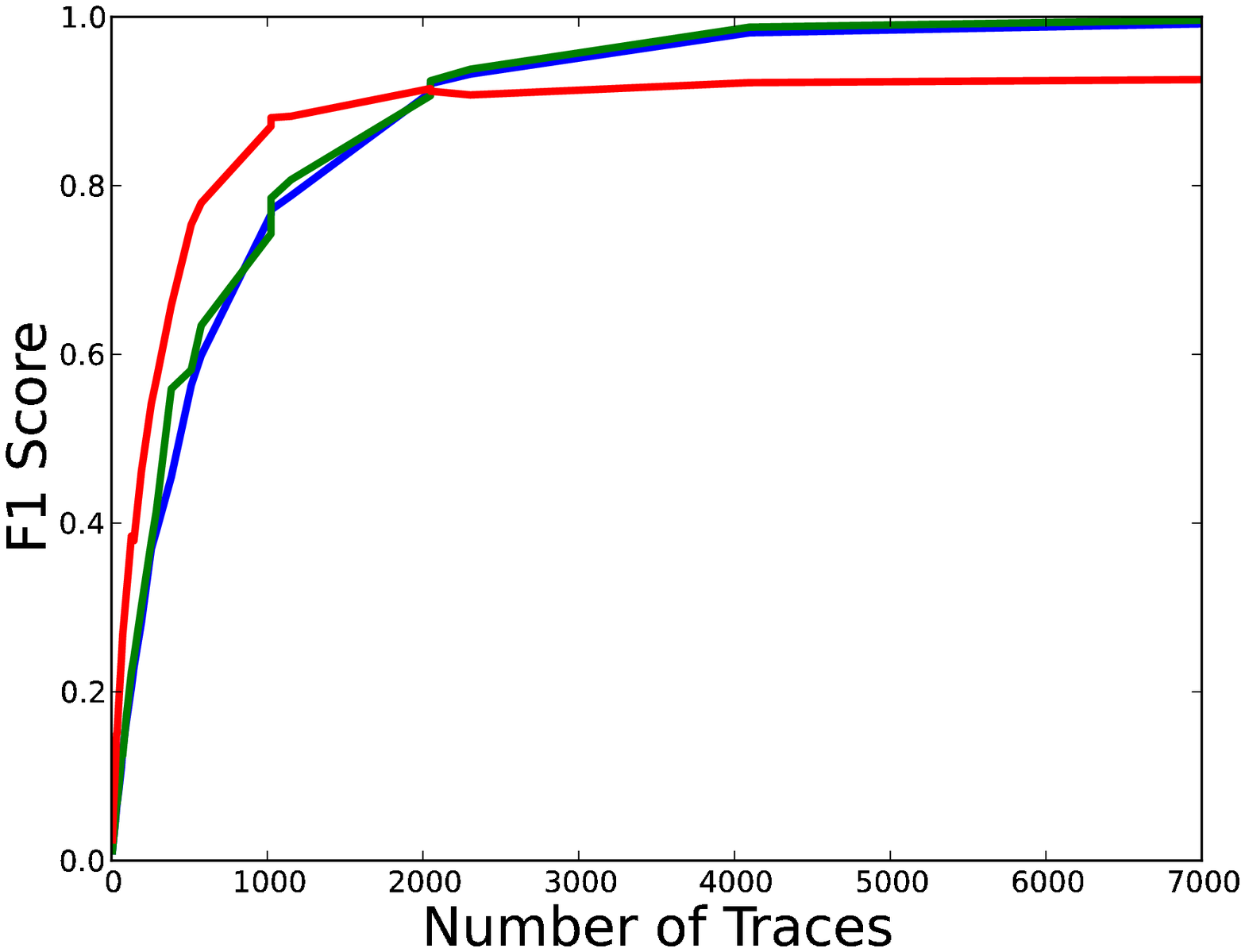}\label{fig:r3}}&
    \subfigure[$G_{n,p}$]{\includegraphics[width=5cm]{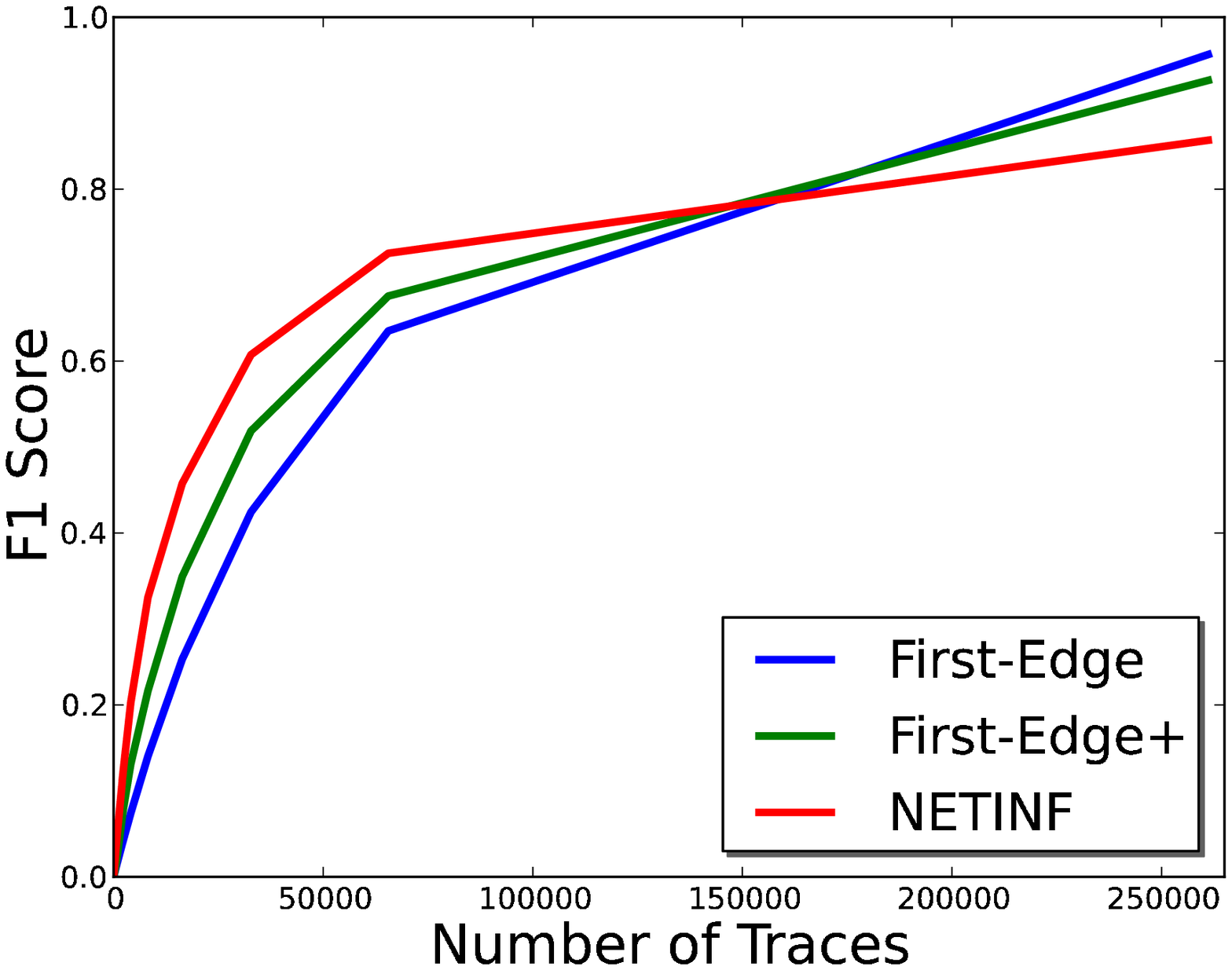}\label{fig:r4}}
    \end{tabular} 
    \end{center} 
    \vspace{-0.5cm}
    \caption{F1 score of the First-Edge, First-Edge$+$, and \NetInf algorithms applied to different real and synthetic networks against a varying number of traces. (best viewed in color)} 
    \label{fig:results}
\end{figure}

In First-Edge$+$, we use the memoryless property of the exponential distribution to establish the probability $p$ of an edge pertaining to a network $G$. The algorithm works as follows. Consider a node $u$ that appears as the root of a trace at time $t_0=0$. When $u$ spreads the epidemic, some node $v$ is going to be the next infected at time $t_1$, which was sampled from an exponential distribution with parameter $\lambda$. At time $t_1$, notice that there are exactly $d_u -1$ nodes waiting to be infected by $u$, and exactly $d_v-1$ waiting to be infected by  $v$, where $d_u$ and $d_v$ are the degrees of $u$ and $v$ respectively. At time $t_1$ any of these nodes is equally likely to be infected, due to the memoryless property. Moreover, the next node $w$ that appears in a trace after time $t_1$ is going to be infected by $u$ with probability $p_{(u,w)}=\frac{d_u-1}{d_u+d_v-2}$ and by $v$ with probability $p_{(v,w)}=\frac{d_v-1}{d_u+d_v-2}$. We can approximate\footnote{The exact probability depends on the number of edges between each of the nodes $u_1,\ldots,u_k$ and the rest of the graph.} this reasoning for larger prefixes of the trace: given a sequence  $u_1,\cdots, u_k$ of infected nodes starting at the source of the epidemic, the probability that $u_{k+1}$ is a neighbor of $u_i$ is roughly $p_{(u_i,u_{k+1})}\simeq\frac{d_{u_i}}{\sum_j d_{u_j}}$. Therefore, for every segment of a trace that starts at the source, we infer an edge $(u,v)$ if $p_{(u,v)}> p$, computed using the reconstructed degrees, where $p$ is a tunable parameter. In our experiments we arbitrarily chose $p=0.5$. 

Note that First-Edge+ may not terminate as soon as we have inferred enough edges, even in the event that all true positives have been found, an effect that degrades its precision performance. To prevent this, we keep a variable $T$, which can be thought of as the \emph{temperature} of the inference process. Let $M$ be a counter of the edges inferred at any given time during the inference process, and $\hat{E}$ be an estimate of the total number of edges, computed using the degree reconstruction algorithm in the pre-processing phase. We define $T=\frac{M}{\hat{E}}$ and run the algorithm as long as $T<1.0$. In addition,  whenever we infer a new edge, we flip a coin and remove, with probability $T$, a previously inferred edge with the lowest estimated probability of existence. Thus, while the network is ``cold'', i.e., many undiscovered edges, edges are rapidly added and a few are removed, which boosts the recall. When the network is ``warm'', i.e., the number of inferred edges approaches $|E|$, we carefully select edges by exchanging previously inferred ones with better choices, thereby  contributing to the precision.

Figure~\ref{fig:results} contrasts the performance of First-Edge, First-Edge$+$ and an existing network algorithm, \NetInf~\cite{Gomez-Rodriguez2010Inferring}, with respect to the F1 measure. \NetInf requires the number of edges in the network as input, and thus we give it an advantage, by setting the number of edges to the true cardinality of edges for each network. 

In Figures \ref{fig:r1} and \ref{fig:r2}, we observe that, as First-Edge$+$ and \NetInf are less conservative, their F1 performances have an advantage over First-Edge for small numbers of traces, with First-Edge+ approaching the performance to \NetInf. Interestingly, in Figure~\ref{fig:r3}, we see that First-Edge and First-Edge+ achieve perfect tree inference with roughly $5,000$ traces, which reflects a trace complexity in $\Omega(n)$ rather than in $O(\log n)$, which is the trace complexity of Algorithm~\ref{alg:tree}.\footnote{In our experiments Algorithm~\ref{alg:tree} consistently returned the true edge set without false positives with $O(\log n)$ traces for various networks of various sizes. Therefore, in the interest of space we omit the data from these experiments.} This result illustrates the relevance of the algorithms for special cases we developed in Section~\ref{sec:tail}. Last, we observe that $G_{n,p}$ random graphs seem to have very large trace complexity. This is shown in Figure~\ref{fig:r4}, where neither our algorithms nor \NetInf can achieve high inference performance, even for large numbers of traces.

In accordance with our discussion in Section~\ref{sec:firstedge-algo}, we confirm that, in practice, we need significantly fewer than $n*\Delta$ traces for inferring most of the edges. It is perhaps surprising that First-Edge+, which is extremely simple, achieves comparable performance to the more elaborate counterpart, \NetInf. In addition, while \NetInf reaches a plateau that limits its performance, First-Edge+ approaches perfect inference as the number of traces goes to $\Omega(n\Delta)$. In the cases in which \NetInf achieves higher performance than First-Edge+, the latter is never much worse than the former. This presents a practitioner with a trade-off between the two algorithms. For large networks, while First-Edge+ is extremely easy to implement and makes network inferences (in a preemptive fashion) in a matter of seconds, \NetInf takes a couple of hours to run to completion and requires the implementation of an elaborate algorithm.

\section{Conclusion}
\label{sec:conc}

Our goal is to provide the building blocks for a rigorous foundation to the rapidly-expanding network inference topic. Previous works have validated claims through experiments on relatively small graphs as compared to the large number of traces utilized, whereas the relation that binds these two quantities remains insufficiently understood. Accordingly, we believe that a solid foundation for the network inference problem remains a fundamental open question, and that works like \cite{Netrapalli12Learning}, as well as ours, provide the initial contributions toward that goal. 

Our results have direct applicability in the design of network inference algorithms. More specifically, we rigorously study how much useful information can be extracted from a trace for network inference, or more generally, the inference of network properties without reconstructing the network, such as the node degree distribution. We first show that, to perfectly reconstruct general graphs, nothing better than looking at the first pair of infected nodes in a trace can really be done. We additionally show that the remainder of a trace contains rich information that can reduce the trace complexity of the task for special case graphs. Finally, we build on the previous results to develop extremely simple and efficient reconstruction algorithms that exhibit competitive inference performance with the more elaborate and computationally costly ones.

Some open technical questions stemming from our work are immediately apparent. For instance, what is the true lower bound for perfect reconstruction? Is it $O(n^2)$, $O(n \Delta)$ or some other bound which, in the case of the clique, reduces to what we have shown? And, 
 are there other meaningful statistics apart from the degree distribution that can be efficiently recovered? For graphs with maximum degree $\Delta$, our perfect reconstruction algorithm has running time exponential in $\Delta$: is this exponential dependence necessary? And while the algorithm's trace complexity is polynomial in $\Delta$, the upper bound of $\tilde{O}(\Delta^9)$ proven here is far from matching the lower bound $\Omega(\Delta^{2-\epsilon})$; what is the correct dependence of trace complexity on $\Delta$?
The bounded-degree restriction, while natural, is unlikely to be satisfied by real-world networks; is perfect reconstruction possible for the types of graphs that are likely to occur in real-world situations?

Perhaps the most relevant  avenue for future research in our context is to go beyond the notion of perfect reconstruction. This notion, while quite reasonable as a first step, is not flexible enough to be deployed in practical situations. One would need to take into account the possibility of accepting some noise, i.e. some false positives as well as false negatives. So the main issue is to look for algorithms and lower bounds that are expressed as a function of the precision and recall one is willing to accept in an approximate reconstruction algorithm.

Finally, it would be very interesting to develop similar results, or perhaps even a theory, of trace complexity for other types of information spreading dynamics. 





\newpage

\appendix
\section{Proofs missing from Section~\ref{sec:lb}}\label{app:proofs}

We start by proving Lemma~\ref{lem:lb:comb}, which is the combinatorial heart of our lower bound.
\begin{proof}[Proof of Lemma~\ref{lem:lb:comb}]
Points 1,~2,~4 are either obvious or simple consequences of point 3, which we now consider.
For $a \ge 2$, let $P_{a,b}$ be the probability that $1,2$ appear (in any order) in the positions $a,b$ conditioned on the starting node being different from $1,2$. Moreover, let $P_{1,b}$ the probability that $1,2$ appear (in any order) in the positions $1,b$ conditioned on the starting node being one of $1,2$. Then $p_{a,b} = \frac{n-2}n P_{a,b}$, for $a \ge 2$, and $p_{1,b} = \frac2n P_{1,b}$.

We now compute $P_{a,b}$. First, we assume $a = 1$. Them
$$P_{1,b} =  \prod_{i=2}^{b-1} \frac{i \cdot (n-i-1)}{(i-1) \cdot (n-i) + (n-i-1)} \cdot  \frac{b-1}{(b-1)(n-b)+(n-b-1)}.$$

Now assume that $a \ge 2$. We have:
$$ P_{a,b} = \prod_{i=1}^{a-1} \frac{i(n-i-2)}{i(n-i)} \cdot \frac{a2}{a(n-a)} \cdot \prod_{i=a+1}^{b-1} \frac{i \cdot (n-i-1)}{(i-1) (n-i) + (n-i-1)} \cdot  \frac{b-1}{(b-1)(n-b)+(n-b-1)}.$$

By telescoping, specifically by 
$\prod_{i=s}^{t} \frac{n-i-2}{n-i} = \frac{(n-t-1)(n-t-2)}{(n-s)(n-s-1)}$, we can simplify the expression to:
$$ P_{a,b} = 2 \cdot \frac{n-a-1}{(n-1)(n-2)} \cdot \prod_{i=a+1}^{b-1} \frac{i \cdot (n-i-1)}{(i-1) (n-i) + (n-i-1)} \cdot  \frac{b-1}{(b-1)(n-b)+(n-b-1)}.$$
Moreover, by trying to simplify the product term, and by collecting the binomial, we get:
$$ P_{a,b} = \frac{n-a-1}{\binom {n-1}2} \cdot \prod_{i=a+1}^{b-1} \frac{1}{1 + \frac{i-1}{i  (n-i-1)}} \cdot  \frac{b-1}{(b-1)(n-b)+(n-b-1)} \quad a \ge 2$$

Then, observe that for each $a \ge 1$, $b > a$, (and, if $a = 1$, $b \ge 3$) we have:
$$p_{a,b} = \left(1 \pm O\left(\frac1n\right)\right) \cdot \frac{n-a-1}{\binom {n}2} \cdot \prod_{i=a+1}^{b-1} \frac{1}{1 + \frac{i-1}{i  (n-i-1)}} \cdot  \frac{b-1}{(b-1)(n-b)+(n-b-1)}$$

We highlight the product inside $p_{a,b}$'s expression:
\begin{align*}
\pi_{a,b} &= \prod_{i=a+1}^{b-1} \frac{1}{1 + \frac{i-1}{i  (n-i-1)}}  = \prod_{i=a+1}^{b-1} \frac1{1 - \frac{1}{i(n-i)}} \cdot \prod_{i=a+1}^{b-1} \left(\frac{1}{1 + \frac{i-1}{i  (n-i-1)}} \cdot \left(1 - \frac{1}{i(n-i)}\right) \right)\\
&=\prod_{i=a+1}^{b-1} \frac1{1 - \frac{1}{i(n-i)}} \cdot \prod_{i=a+1}^{b-1} \left(\frac{i(n-i-1)}{i(n-i)-1} \cdot \frac{i(n-i)-1}{i(n-i)} \right)\\
& = \prod_{i=a+1}^{b-1} \frac1{1 - \frac{1}{i(n-i)}} \cdot \prod_{i=a+1}^{b-1} \frac{n-i-1}{n-i}  = \prod_{i=a+1}^{b-1} \frac1{1 - \frac{1}{i(n-i)}} \cdot \prod_{i=a+1}^{b-1} \frac1{1+\frac1{n-i-1}}\\
&  =  \frac{n-b}{n-a-1} \cdot \prod_{i=a+1}^{b-1} \frac1{1 - \frac{1}{i(n-i)}}.
\end{align*}
We take the product of the denominators of the ratios, obtaining:
$$\prod_{i=a+1}^{b-1} \left(1 - \frac{1}{i(n-i)}\right) \ge \prod_{i=1}^{n-1} \left(1 - \frac1{i(n-i)}\right) \ge  1 - O\left(\frac{\ln n}n\right).$$
Therefore, we have
$$\pi_{a,b} = \left(1 + O\left(\frac{\ln n}n\right)\right) \frac{n-b}{n-a-1}.$$

We now turn back to $p_{a,b}$ expressions. Plugging in our approximation of $\pi_{a,b}$, we get:
$$p_{a,b} = \frac{1 + O\left(\frac{\ln n}n\right)}{\binom{n }2} \cdot \frac{b-1}{b-1+\frac{n-b-1}{n-b}}.$$
The term $\frac{b-1}{b-1+\frac{n-b-1}{n-b}}$ is bounded within $1 - \frac1b$ and $1$. Therefore, if $a \ge 2$, $b \ge a + 1$ (and if $a = 1$, $b \ge 3$), we have:
$$p_{a,b} = \frac{1 + O\left(\frac{\ln n}n\right) - O\left(\frac1b\right)}{\binom{n}2}.$$
\end{proof}

Before moving on the two main Lemmas, we state (a corollary of) the Berry-Esseen Theorem \cite{B41, E56} which will be a crucial part of their proofs.
\begin{theorem}[Berry-Esseen \cite{B41,E56}]\label{thm:be}
Let $Z_1,\ldots,Z_n$ be independent random variables, such that $E[Z_i] = 0$ for each $i=1,\ldots, n$, and such that $A = \sum_{i=1}^n E\left[\left|Z_i\right|^3\right]$ is finite. Then, if we let $B = \sqrt{\sum_{i=1}^n E[Z_i^2]}$ and $Z = \sum_{i=1}^n Z_i$, we have that
$$\Pr[Z > \delta \cdot B] \ge \frac12 - \Theta\left(\delta + \frac{A}{B^3}\right),$$
and
$$\Pr[Z < -\delta \cdot B] \ge \frac12 - \Theta\left(\delta + \frac{A}{B^3}\right).$$
\end{theorem}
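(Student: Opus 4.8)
The plan is to derive this corollary form directly from the classical Berry--Esseen theorem in its standard (non-identically-distributed) formulation. That classical statement asserts that, under the hypotheses given, there is an absolute constant $C$ such that the cumulative distribution function of the normalized sum $Z/B$ is uniformly close to the standard normal CDF $\Phi$, namely
$$\sup_{x \in \R} \left| \Pr\left[\frac{Z}{B} \le x\right] - \Phi(x) \right| \le C \cdot \frac{A}{B^3}.$$
I would take this as the black box supplied by the citations \cite{B41,E56} and reduce the two tail statements to it.

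First I would rewrite the event $\{Z > \delta B\}$ as $\{Z/B > \delta\}$ and pass to complements, obtaining
$$\Pr[Z > \delta B] = 1 - \Pr\left[\frac{Z}{B} \le \delta\right] \ge 1 - \Phi(\delta) - C\frac{A}{B^3},$$
where the inequality is exactly the Berry--Esseen bound applied at $x=\delta$. Next I would control $1-\Phi(\delta)$ from below. Since $1 - \Phi(\delta) = \Phi(-\delta) = \tfrac12 - \int_0^{\delta} \phi(t)\,dt$, with $\phi$ the standard normal density, and since $\phi(t) \le (2\pi)^{-1/2}$ for all $t$, the integral is at most $\delta/\sqrt{2\pi}$. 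Hence $1 - \Phi(\delta) \ge \tfrac12 - \delta/\sqrt{2\pi}$, and combining with the previous display gives
$$\Pr[Z > \delta B] \ge \frac12 - \frac{\delta}{\sqrt{2\pi}} - C\frac{A}{B^3} = \frac12 - \Theta\!\left(\delta + \frac{A}{B^3}\right),$$
which is the first claimed inequality.

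For the second inequality I would invoke symmetry: the variables $-Z_1,\ldots,-Z_n$ are again independent and mean zero, with the same second moments and the same third absolute moments, so they share the same $A$ and $B$. Applying the first inequality to their sum $-Z$ yields $\Pr[-Z > \delta B] \ge \tfrac12 - \Theta(\delta + A/B^3)$, i.e.\ $\Pr[Z < -\delta B] \ge \tfrac12 - \Theta(\delta + A/B^3)$, as required.

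There is no serious obstacle here; the mathematical content is entirely carried by the cited Berry--Esseen bound, and the remaining work is the linearization $\Phi(-\delta) \ge \tfrac12 - \delta/\sqrt{2\pi}$ together with bookkeeping of the $\Theta(\cdot)$ error term. The only point requiring a little care is that the stated lower bounds are informative only when $\delta + A/B^3$ is bounded safely below $\tfrac12$ (otherwise the right-hand side is negative and the inequality is vacuous); this is harmless because the intended applications in Lemmas~\ref{lem:lb:perm} and~\ref{lem:lb:waitingtimes} invoke the theorem with $\delta = o(1)$ and $A/B^3 = o(1)$.
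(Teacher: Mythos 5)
Your proposal is correct, and it matches the paper's treatment: the paper states this result without proof, explicitly presenting it as ``(a corollary of) the Berry--Esseen Theorem'' with citations, and your derivation --- applying the classical uniform bound $\sup_x |\Pr[Z/B \le x] - \Phi(x)| \le C A/B^3$ at $x = \pm\delta$, using $1-\Phi(\delta) \ge \tfrac12 - \delta/\sqrt{2\pi}$, and closing with the symmetry argument for the lower tail --- is exactly the standard reduction the paper implicitly relies on. Your closing remark that the bound is informative only when $\delta + A/B^3$ is small is also consistent with how the paper applies it in Lemmas~\ref{lem:lb:perm} and~\ref{lem:lb:waitingtimes}, where both quantities are $o(1)$.
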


We will now use our Lemma~\ref{lem:lb:comb}, and the Berry-Esseen Theorem (Theorem~\ref{thm:be}), to prove Lemma~\ref{lem:lb:perm}.

\begin{proof}[Proof of Lemma~\ref{lem:lb:perm}]
Let $\ell_{a,b}$ be the number of traces having one of the nodes in $\{1,2\}$ in position $a$, and the other in position $b$. Then, $\sum_{b=2}^n \sum_{a=1}^{b-1} \ell_{a,b} = \ell$.  We start by computing the likelihoods $\mathcal{L}_0, \mathcal{L}_1$ of $\mathcal{P}$ assuming that the unknown graph is, respectively, $G_0$ or $G_1$. We proved in Lemma~\ref{lem:lb:comb}, that the two likelihood of a trace only depends on the positions of $1$ and $2$. Therefore, if $p_{a,b}$ is the probability of obtaining a trace with $1,2$-positions equal to $a,b$ in the graph $G_1$, we have:
$$\frac{L_0(\mathcal{P})}{L_1(\mathcal{P})} = \frac{\binom{n}2^{-1}}{\prod_{a=1}^{n-1}\prod_{b=a+1}^{n} p_{a,b}^{\ell_{a,b}}} = \prod_{a=1}^{n-1}\prod_{b=a+1}^{n} (1+ d(a,b))^{-\ell_{a,b}}.$$

Regardless of the unknown graph, we have that the probability that there exists $b < \beta = \Theta(\sqrt{\log n})$ for which there exists at least one $a < b$ such that $\ell_{a,b} > 0$ is at most $O\left(\ell \cdot \frac{\beta^2}{n^2}\right) = o\left(\log^{-1} n\right)$. We condition on the opposite event. Then,
\begin{align*}
\mathcal{R} & = \ln \frac{L_0(\mathcal{P})}{L_1(\mathcal{P})} = \frac{\binom{n}2^{-1}}{\prod_{b=\beta}^{n} \prod_{a=1}^{b-1}  p_{a,b}^{\ell_{a,b}}} = - \sum_{b=\beta}^{n}\sum_{a=1}^{b-1} \left(\ell_{a,b} \ln (1+ d(a,b))\right) \\
&=  - \sum_{b=\beta}^{n}\sum_{a=1}^{b-1} \left(\ell_{a,b} \left(d(a,b) +O\left(d(a,b)^2\right)\right)\right),
\end{align*}
where  $d: \binom{[n]}2 \rightarrow \mathbf{R}$ be the function defined in the statement of Lemma~\ref{lem:lb:comb}.

We aim to prove that the random variable $\mathcal{R}$ is sufficiently anti-concentrated that, for any  unknown graph $G_i$, the probability that $\Pr[\mathcal{R} > 0 | G_i] > \frac12\pm o(1)$ and $\Pr[\mathcal{R} < 0 | G_i] > \frac12 \pm o(1)$. This will prove that one cannot guess what is the unknown graph with probability more than $\frac12 \pm o(1)$.

First, let $X_0$ and $X_1$ be two random variable having support $\binom{[n]}2$, the first uniform and the second distributed like the distribution $p$ of Lemma~\ref{lem:lb:comb}.

We will compute a number of expectations so to finally apply Berry-Esseen theorem. First, recall that $\sum_{b =2}^n \sum_{a=1}^{b-1} d(a,b) = 0$. Therefore, $$E[ d(X_0) ] = 0.$$

Moreover,
$$E[ d(X_1) ] = \sum_{b=2}^n \sum_{a=1}^{b-1}\left(p_{a,b} \cdot d(a,b)\right) = \sum_{b=2}^n \sum_{a=1}^{b-1}\left( \frac{1+ d(a,b)}{\binom{n}2} \cdot d(a,b)\right).$$
Recall that $\sum_{b=2}^{n} \sum_{a = 1}^{b-1} d(a,b) = 0$. Then
$$E[ d(X_1) ] = \binom{n}2^{-1} \cdot \sum_{b=2}^n \sum_{a=1}^{b-1} d(a,b)^2.$$
Therefore, $E[ d(X_1) ] \ge 0$. Moreover,
$$E[ d(X_1) ] \le O\left( \sum_{b=2}^{\Theta(n / \ln n)} \left(\frac b{n^2} \cdot \left(\frac1{b}\right)^2\right) + \sum_{b = \Theta(n / \ln n)}^n \left(\frac{b}{n^2} \cdot \left(\frac{\ln n}{n}\right)^2\right)\right) = O\left(\frac{\ln^2 n}{n^2}\right).$$

It follows that, for $i = 0,1$, we have
$$0 \le E[ d(X_i) ] \le O\left(\frac{\ln^2 n}{n^2}\right).$$

We now move to the second moments. First observe that, for $i=0,1$,
$$E[d(X_i)^2] = \Theta\left(\binom n2^{-1} \sum_{b=2}^n \sum_{a=1}^{b-1} p_{a,b}^2\right).$$

We lower bound both $E[ d(X_0)^2 ]$ and $E[ d(X_1)^2 ]$ with
$$E[d(X_i)^2] = \Omega\left( \sum_{b=\Theta(n/\ln n)}^n \left(\frac{b}{n^2} \cdot \left(\frac{\ln n}{n}\right)^2\right)\right) =  \Omega\left(\frac{\ln^2 n}{n^2}\right).$$
Analogously, we have that $E[ d(X_0)^2 ]$ and $E[ d(X_1)^2 ]$ can both be upper bounded by
$$O\left(\sum_{b=2}^{\Theta(n / \ln n)} \left(\frac{b}{n^2} \cdot \left(\frac{1}{b}\right)^2\right) + \sum_{b=\Theta(n/\ln n)}^n \left(\frac{b}{n^2} \cdot \left(\frac{\ln n}{n}\right)^2\right)\right) =  O\left(\frac{\ln^2 n}{n^2}\right).$$

We then have, for $i = 0,1$, 
$$E[ d(X_i)^2 ] = \Theta\left(\frac{\ln^2 n}{n^2}\right).$$
Moreover, the variance of $d(X_i)$, $i=0,1$, is equal to $S^2 = \Theta\left(\frac{\ln^2 n}{n^2}\right)$.

By linearity of expectation, regardless of the unknown graph, if we let $C = \Theta\left(\ell \cdot \frac{\ln^2 n}{n^2}\right)$, we have that
$$-C\le E[\mathcal{R}] \le C.$$

We  upper bound both $E[ |d(X_0)|^3]$ and $E[ |d(X_1)|^3 ]$ with
$$O\left(\sum_{b=2}^{\Theta(n / \ln n)} \left(\frac b{n^2} \cdot \left(\frac1{b}\right)^3\right) + \sum_{b = \Theta(n / \ln n)}^n \left(\frac{b}{n^2} \cdot \left(\frac{\ln n}{n}\right)^3\right)\right) \le O\left(\frac{1}{n^2} + \frac{\ln^3 n}{n^3}\right) = O\left(\frac1{n^2}\right).$$

It follows that
\begin{align*}
K &= E[|d(X_i) - E[d(X_i)]|^3] \le E[ \max(8|d(X_i)|^3, 8|E[d(X_i)]|^3)) ] \\
&\le O\left( \max\left( E[|d(X_i)|^3], \frac{\ln^6 n}{n^6}\right)\right) \le O\left(\frac1{n^2}\right).
\end{align*}

Now, we apply the Berry-Esseen bound with $A \le \ell  K = o\left(\frac1{\ln^2 n}\right)$ and $B = \Theta(\sqrt{\ell S^2}) = \Theta(1)$.
 We compute the error term of the Berry-Esseen theorem:
$$\frac A{B^3} \le  O\left(\frac1{\ln^2 n}\right) = o(1).$$
Therefore, $\mathcal{R}$ will behave approximately like a gaussian in a radius of (at least) $\omega(1)$ standard deviations $B$ around its mean. Observe that the standard deviation $B$ satisfies $B = \Theta\left(\sqrt{C}\right)$. Since $C = o(1)$ (by $\ell = o\left(\frac{n}{\ln^2n}\right)$), we have $B = \omega(C)$. Therefore, regardless of the unknown graph, the probability that $\mathcal{R}$ will be positive is $\frac12 \pm o(1)$.\end{proof}

We finally prove Lemma~\ref{lem:lb:waitingtimes}, which deals with the likelihoods of the waiting times in the traces.

\begin{proof}[Proof of Lemma~\ref{lem:lb:waitingtimes}]
Let $f_0(x) = t e^{-t x}$ and $f_1(x)= (t-1) e^{-(t-1) x}$ be two exponential density functions with parameters $t$ and $t-1$.
Since we will be considering ratio of likelihoods, we compute the ratio of the two densities:
$$\frac{f_0(x)}{f_1(x)} = \left(1 + \frac1{t-1}\right) \cdot e^{-x}.$$ 

Observe that, if $q = o(t)$, it holds that
\begin{equation}\label{eqn:ratio_approx}
\left(\frac{f_0(\frac{1}{t})}{f_1(\frac{1}{t})}\right)^{q} = 1 + \frac{q}{2t^2} + o\left(\frac q{t^2}\right).
\end{equation}

Let $\delta_{i,j} = 1$ if, in the $j$th trace, exactly one of the nodes in $\{1,2\}$ was within the first $i$ informed nodes; otherwise, let $\delta_{i,j} = 0$. 
Then $\ell_i = \sum_{j=1}^{\ell} \delta_{i,j}$.

\smallskip

For $i=1,\ldots,n-1$ and $j=1,\ldots,\ell_i$, let $t_{i,j}$ be the  time waited (from the last time a node was informed) to inform the $(i+1)$th node in the $j$th of the traces having exactly one of the two nodes $1,2$ in the first $i$ positions. By the memoryless property of the exponential random variables, and by the fact that the minimum of $n$ iid $\expv(\lambda)$ random variables is distributed like $\expv(n\lambda)$, we have that (once we condition on the $\ell_i$'s) the $t_{i,j}$'s variables are independent, and that $t_{i,j}$ is distributed like $\expv(c \lambda)$ where $c$ is the size of the cut induced by the first $i$ nodes of the $j$th trace (of those having one of the nodes $1,2$ within the first $i$ nodes).
Further, from the scaling property of the exponential random variables, we have that $\lambda t_{i,j}$ is distributed like $\expv(c)$.

\smallskip

Let $T = \lambda \cdot \sum_{i=1}^{n-1}\sum_{j=1}^{\ell_i} t_{i,j} \delta_{i,j}$.
Let $T_0$ be the random variable $T$ conditioned on $G_0$, and let $T_1$ be the random variable $T$ conditioned on $G_1$ (observe that, since $T$ is conditioned on $\ell_1,\ldots,\ell_{n-1}$, both $T_0$ and $T_1$ will also be conditioned on $\ell_1,\ldots,\ell_{n-1}$). Then,
$$ T_0 = \sum_{\substack{i,j\\\delta_{i,j}=1}} (\lambda t_{i,j}) = \sum_{\substack{i,j\\\delta_{i,j}=1}} \expv( i \cdot (n-i)) = \sum_{i=1}^{n-1} \sum_{j=1}^{\ell_i} \expv(i \cdot (n-i)),$$
$$ T_1 = \sum_{\substack{i,j\\\delta_{i,j}=1}} (\lambda t_{i,j}) = \sum_{\substack{i,j\\\delta_{i,j}=1}} \expv(i \cdot (n-i)-1) = \sum_{i=1}^{n-1} \sum_{j=1}^{\ell_i} \expv(i \cdot (n-i) - 1) .$$

Now, let $X$ be distributed like $\expv(x)$, for some $x > 0$.
In general, we have $E[X^k] = \frac{k!}{x^{k}}$.
If we let $Y = X - E[X]$, we obtain $E[Y] = 0$. Moreover, we have
$$E[Y^2] = E[X^2] - 2 E[X] E[X] + E[X]^2 = E[X^2] - E[X]^2 = \frac{2}{x^2} - \frac{1}{x^2} = x^{-2.}$$

We also have,
\begin{align*}
E[|Y|^3] & = E\left[\left|X - E[X]\right|^3\right]  \le E\left[\left(X + E[X]\right)^3\right] = E\left[X^3 + 3X^2 E[X] + 3X E[X]^2 + E[X]^3\right] \\
& = E[X^3] + 3 E[X^2] E[X] + 4 E[X]^3 = 16x^{-3},
\end{align*}
where the inequality follows from $X \ge 0$.

Let us consider $ T_k$, $k = 0,1$. They are the sum, over $i = 1,\ldots,n-1$, of $\ell_i$ independent exponential random variables with parameters $i \cdot(n-i) - k$. If we center each of those variables in $0$ (that is, we remove the expected value of each exponential variable), obtaining --- say --- variables $Y_{k,i,j} = \expv(i \cdot (n-i) - k) - \frac1{i \cdot (n-i) - k}$, we can write $ T_k$ as:
$$ T_k = \sum_{i=1}^{n-1} \frac{\ell_i}{i \cdot (n-i) - k} + \sum_{i=1}^{n-1} \sum_{j=1}^{\ell_i} Y_{k,i,j} = E[ T_k] + \sum_{i=1}^{n-1} \sum_{j=1}^{\ell_i} Y_{k,i,j}.$$

Let us bound the absolute difference between the expected values of $T_0$ and $T_1$, recalling that  $\ell_i = \Theta(\alpha i (n-i))$, for each $i=1,\ldots,n-1$:
$$E[T_1] - E[T_0] = \sum_{i=1}^{n-1} \left(\ell_i \cdot \left(\frac1{i(n-i)-1} - \frac1{i(n-i)}\right)\right) =\sum_{i=1}^{n-1} \frac{\ell_i}{i^2(n-i)^2-i(n-i)} = \Theta\left(\alpha \cdot \frac{\log n}{n}\right) = D.$$

We now aim to show that $\Pr[T_k > E[T_0]] = \frac12 \pm o(1)$, for both $k=0$ and $k=1$. To do so, we use the Berry-Esseen theorem (Theorem~\ref{thm:be}). We apply it to our collection of variables $Y_{k,i,j}$ (which will play the $Z_i$ variables' role in the Berry-Esseen theorem). We get:
$$A \le O \left( \sum_{i=1}^{n-1} \frac{\ell_i}{i^3(n-i)^3} \right),$$
and
$$B = O\left(\sqrt{\sum_{i=1}^{n-1} \frac{\ell_i}{i^2(n-i)^2}}\right).$$

 Therefore,
$$A \le O\left(\alpha \cdot \sum_{i=1}^{n-1} \frac1{i^2(n-i)^2}\right) = O\left(\alpha \cdot n^{-2}\right),$$
and
$$B  = \Theta\left(\sqrt{\alpha \cdot \sum_{i=1}^{n-1} \frac1{i(n-i)}}\right) = \Theta\left( \sqrt{ \alpha \cdot \frac{\log n}{n}}\right).$$

Therefore, $\frac{A}{B^3} \le O\left(\left(\alpha n \log^3 n\right)^{-\frac12}\right)$.
We set $\delta = \frac1{\sqrt{\log n}}$.
Then, for each $k,k' \in \{0,1\}$, the probability that $T_k$ exceeds $E[T_{k'}]$ by at least $\Omega\left(\sqrt{\frac{\alpha \log n}n}\right)$ is at least $\frac12 - \Theta\left(\frac1{\sqrt{\log n}}\right) = \frac12 - o(1)$.

\medskip

Now consider, the likelihoods $L_0$, $L_1$ of the sequence of waiting times with each of the two graphs. 
We have:
$$L_0(\mathcal{W}) = \prod_{i=1}^{n-1} \prod_{j=1}^{\ell} ( \lambda \cdot i \cdot (n-i) \cdot e^{-\lambda i(n-i)  t_{i,j}})$$
$$L_1(\mathcal{W}) = \prod_{i=1}^{n-1} \prod_{j=1}^{\ell} ( \lambda \cdot (i \cdot (n-i) - \delta_{i,j}) \cdot e^{-\lambda (i(n-i) - \delta_{i,j}) t_{i,j}})$$

Then,
\begin{align*}
\frac{L_0(\mathcal{W})}{L_1(\mathcal{W})} &= \prod_{i=1}^{n-1} \prod_{j=1}^{\ell_i} \left(\left(1 + \frac1{i\cdot (n-i)-1}\right) \cdot e^{-t_{i,j}}\right)\\
& = \prod_{i=1}^{n-1} \prod_{j=1}^{\ell_i} \left(\left(1 + \frac1{i\cdot (n-i)-1}\right) \cdot e^{-\frac 1{i(n-i)}}\right) \cdot \prod_{i=1}^{n-1} \prod_{j=1}^{\ell_i} e^{-t_{i,j} + \frac 1{i(n-i)}}\\
& =\prod_{i=1}^{n-1} \left(\left(1 + \frac1{i\cdot (n-i)-1}\right)^{\ell_i} \cdot e^{-\frac {\ell_i}{i(n-i)}}\right) \cdot \prod_{i=1}^{n-1} \prod_{j=1}^{\ell_i} e^{-t_{i,j} + \frac 1{i(n-i)}}
\end{align*}
Since $\ell_i = o\left(i \cdot (n-i)\right)$, we can apply Equation~(\ref{eqn:ratio_approx}), and the former product equals
\begin{align*}
\frac{L_0(\mathcal{W})}{L_1(\mathcal{W})} & = \prod_{i=1}^{n-1} \left(1 + \Theta\left(\frac{\ell_i}{i^2\cdot (n-i)^2}\right)\right)  \cdot \prod_{i=1}^{n-1} \prod_{j=1}^{\ell_i} e^{-t_{i,j} + \frac 1{i(n-i)}}\\
& = \prod_{i=1}^{n-1} \left(1 + \Theta\left(\frac{\alpha}{i\cdot (n-i)}\right)\right)  \cdot \prod_{i=1}^{n-1} \prod_{j=1}^{\ell_i} e^{-t_{i,j} + \frac 1{i(n-i)}}
\end{align*}
The former product simplifies to $1 \pm \Theta\left(\alpha \cdot \frac{\log n}n\right) = 1 \pm o(1)$. Therefore,
\begin{align*}
\frac{L_0(\mathcal{W})}{L_1(\mathcal{W})}& = (1 \pm o(1))  \cdot e^{\sum_{i=1}^{n-1} \frac{\ell_i}{i(n-i)} - T}\\
& = (1 \pm o(1))  \cdot e^{E[T_0] - T}.
\end{align*}

Therefore, $L_0(\mathcal{W}) > L_1(\mathcal{W})$ if $T \le E[T_0] - 1$, and $L_0(\mathcal{W}) < L_1(\mathcal{W})$ if $T \ge E[T_0] + 1$. We have that $ | E[T_0 - T_1] | \le D = \Theta\left(\alpha \cdot \frac{\log n}{n}\right)$; moreover, the probability that the difference between $T$ and its expectation is at least $2D$, and the probability that it is at most $-2D$, are both $\frac12 - o(1)$, since the standard deviation $B$ satisfies $B = \omega(D)$.

Therefore, 
the probability that the likelihood of graph $G_0$ is higher than the likelihood of graph $G_1$ is $ \frac12 \pm o(1)$, regardless of whether the unknown graph was $G_0$ or $G_1$. The proof is concluded.
\end{proof}

\newpage
\bibliographystyle{abbrv}
\bibliography{paper} 

\end{document}